\pdfminorversion=4
\documentclass[final,leqno,onefignum,onetabnum]{siamltex1213}
\usepackage{subcaption}
\usepackage[ruled,vlined]{algorithm2e}
\SetCommentSty{scriptsize}
\SetKwInput{Input}{Input}
\SetKwInput{Output}{return}
\usepackage{wrapfig}
\usepackage{amsmath, amssymb}
\usepackage{mathtools}
\newcommand{\shat}[1]{\vphantom{#1}\smash[t]{\hat{#1}}}
\usepackage{color}

\usepackage{booktabs}
\usepackage{siunitx}
\usepackage{xfrac}

\usepackage{tikz}
\usetikzlibrary{fit,calc}
\newcommand{\tikzmark}[1]{\tikz[overlay,remember picture] \node(#1) {};}

\newtheorem{remark}{Remark}[section]
\newtheorem{example}{Example}[section]

\newcommand\blfootnote[1]{  \begingroup
  \renewcommand\thefootnote{}\footnote{#1}  \addtocounter{footnote}{-1}  \endgroup
}

\graphicspath{{./figs/}}

\SetKwComment{tcc}{\{}{\}}
\SetKwInput{Input}{Input} \SetKwInput{Output}{return}
\SetKwIF{If}{ElseIf}{Else}{if}{}{else if}{else}{endif}
\SetKwFor{While}{while}{}{endw}
\SetKwFor{ForEach}{for each}{}{endfch}

\newcommand{\code}[1]{{\textnormal{\texttt{#1}}}}
\newcommand{\nnz}{\textnormal{\texttt{nnz}}}

\title{Reducing Parallel Communication in Algebraic Multigrid through
Sparsification}

\author{Amanda Bienz\footnotemark[2]
\and
Robert D. Falgout\footnotemark[3]
\and
William Gropp\footnotemark[4]
\and
Luke N. Olson\footnotemark[5]
\and
Jacob B. Schroder\footnotemark[6]
}

\begin{document}
\maketitle
\slugger{sisc}{xxxx}{xx}{x}{x--x} 
\renewcommand{\thefootnote}{\fnsymbol{footnote}}
\footnotetext[2]{\texttt{bienz2@illinois.edu},
                 \url{http://web.engr.illinois.edu/~bienz2/},
                 Department of Computer Science,
                 University of Illinois at Urbana-Champaign, Urbana, IL 61801}
\footnotetext[3]{\texttt{rfalgout@llnl.gov},
                 \url{http://people.llnl.gov/falgout2},
                 Center for Applied Scientific Computing,
                 Lawrence Livermore National Laboratory, Livermore, CA 94550}
\footnotetext[4]{\texttt{wgropp@illinois.edu},
                 \url{http://wgropp.cs.illinois.edu/},
                 Department of Computer Science,
                 University of Illinois at Urbana-Champaign, Urbana, IL 61801}
\footnotetext[5]{\texttt{lukeo@illinois.edu},
                 \url{http://lukeo.cs.illinois.edu},
                 Department of Computer Science,
                 University of Illinois at Urbana-Champaign, Urbana, IL 61801}
\footnotetext[6]{\texttt{schroder2@llnl.gov},
                 \url{http://people.llnl.gov/schroder2},
                 Center for Applied Scientific Computing,
                 Lawrence Livermore National Laboratory, Livermore, CA 94550}
\renewcommand{\thefootnote}{\arabic{footnote}}

\blfootnote{This material is based upon work supported by the National Science
Foundation Graduate Research Fellowship under Grant No. DGE-1144245. Any opinion,
findings, and conclusions or recommendations expressed in this material are those
of the authors and do not necessarily reflect the views of the National
Science Foundation. }

\blfootnote{This research is part of the Blue Waters sustained-petascale
computing project, which is supported by the National Science Foundation (awards
OCI-0725070 and ACI-1238993) and the state of Illinois. Blue Waters is a joint
effort of the University of Illinois at Urbana-Champaign and its National Center
for Supercomputing Applications.}

\blfootnote{This work was performed under the auspices of the U.S.
Department of Energy by Lawrence Livermore National Laboratory under Contract
DE-AC52-07NA27344 (LLNL-JRNL-673388)} 
\begin{abstract}
Algebraic multigrid (AMG) is an $\mathcal{O}(n)$ solution process
for many large sparse linear systems.  A hierarchy of progressively coarser
grids is constructed that utilize complementary relaxation and interpolation
operators.  High-energy error is reduced by relaxation, while low-energy error
is mapped to coarse-grids and reduced there.  However, large parallel
communication costs often limit parallel scalability.  As the multigrid
hierarchy is formed, each coarse matrix is formed through a triple matrix
product. The resulting coarse-grids often have significantly more nonzeros
per row than the original fine-grid operator, thereby generating
high parallel communication costs on coarse-levels. In this paper, we introduce
a method that systematically removes entries in coarse-grid matrices
after the hierarchy is formed, leading to an improved communication costs.
We sparsify by removing weakly connected or unimportant entries in the
matrix, leading to improved solve time.  The main trade-off is that if
the heuristic identifying unimportant entries is used too aggressively,
then AMG convergence can suffer.  To counteract this, the original
hierarchy is retained, allowing entries to be
reintroduced into the solver hierarchy if convergence is too slow.
This enables a balance between communication cost and convergence, as necessary.
In this paper we
present new algorithms for reducing communication and
present a number of computational experiments in support.
\end{abstract}

\begin{keywords}
multigrid, algebraic multigrid, non-Galerkin multigrid, high performance computing
\end{keywords}

\begin{AMS}\end{AMS}

\pagestyle{myheadings}
\thispagestyle{plain}
\markboth{Bienz, Falgout, Gropp, Olson, Schroder}{Sparsifying AMG}

\section{Introduction}

Algebraic multigrid (AMG)~\cite{McRu1982, BrMcRu1984, RuStu1987}
is an $\mathcal{O}(n)$ linear solver.  For standard discretizations
of elliptic differential equations, AMG is remarkably fast
\cite{hypre,AggCoarse, AggCoarse2}.
We consider AMG as a solver for the symmetric, positive definite
matrix problem
\begin{equation}
  A x = b,
\end{equation}
with $A \in \mathbb{R}^{n \times n}$ and $x, b \in \mathbb{R}^n$.
AMG consists of two phases, a setup and a solve phase.
The setup phase defines a sequence or \textit{hierarchy} of $\ell_{\text{max}}$ coarse-grid and
interpolation operators, $A_{1}, \dots, A_{\ell_{\text{max}}}$ and $P_{0},
\dots, P_{\ell_{\text{max}} - 1}$ respectively.  The solve phase iteratively
improves the solution through relaxation and coarse-grid correction.
The error not reduced by relaxation, called \textit{algebraically smooth},
is transferred to cheaper coarser-levels and reduced there.

The focus of this paper is on the communication complexity of AMG in a
distributed memory, parallel setting.  To be clear, we refer to the
\textit{communication complexity} as the time cost of interprocessor
communication, while referring to the \textit{computational complexity} as the
time cost of the floating point operations.  The \textit{complexity} or
\textit{total complexity} is then the cost of the algorithm, combining the
communication and computational complexities.

Both the convergence and complexity of an algebraic multigrid method are
controlled by the setup phase.  Highly accurate interpolation, which yields a
rapidly converging method, requires a slow rate of coarsening and often
denser coarse operators.  This large number of coarse-levels, as well as
increased density, correlates with an increase in the amount of work
required during a single iteration of the AMG solve phase.  In contrast, sparser
 interpolation and fast coarsening reduce the cost of a single iteration of AMG
 cycle, but often lead to a
deterioration in convergence~\cite{AggCoarse, AggCoarse2}.
Therefore, there is a trade-off between per-iteration complexity and
the resulting convergence factor.

The sparse matrices, $A_{1}, \dots, A_{\ell_{\text{max}}}$, in the multigrid
hierarchy are, by design, smaller in dimension, yet often decrease in sparsity.
As an example of this, Table~\ref{tab:density} shows the properties of a
 hierarchy for a 3D Poisson
problem with a 7-point finite difference stencil on a $100\times 100\times 100$
grid.  We see that as the problem size decreases on
coarse-levels, the average number of nonzero entries per row
increases.  Here we denote by $\nnz$ the number of nonzero entries in the
respective matrix.  Figure~\ref{figure:spy} depicts this effect for
this example, where we see that the density increases on lower levels in the
hierarchy.
\begin{table}[!ht]
\begin{center}
\begin{tabular}{r S[table-format=7.0] S[table-format=7.0] S[table-format=2.0] S[table-format=1.0]}
  {level}
& {matrix size}
& {nonzeros}
& {nonzeros per row} \\ {$\ell$} & {$n$} & {\nnz} & {\sfrac{\nnz}{$n$}} \\  \midrule
0 & 1000000 & 6940000 & 7  \\ 1 & 500000  & 9320600 & 19 \\ 2 & 83345   & 2775281 & 33 \\ 3 & 13265   & 745689  & 56 \\ 4 & 2207    & 208173  & 94 \\ 5 & 333     & 23843   & 72 \\ \end{tabular}
\end{center}
\caption{Matrix properties using classical AMG for a 3D Poisson problem.}\label{tab:density}
\end{table}
\medskip
\begin{figure}[!ht]
  \centering
	\includegraphics[width=\textwidth]{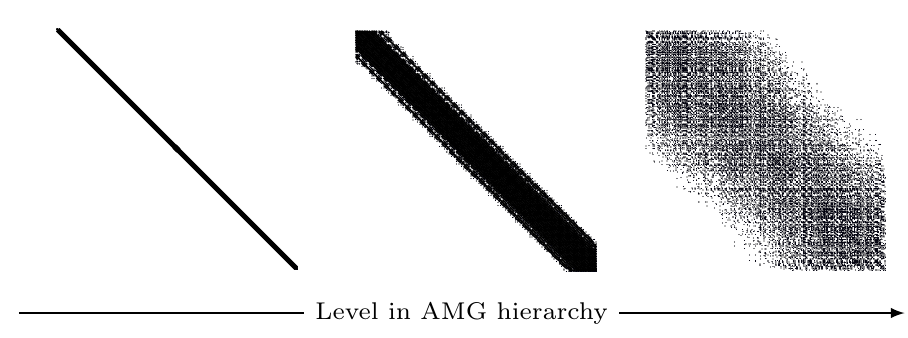}
	\caption{Matrix sparsity pattern using classical AMG for three levels in the hierarchy: $\ell = 0$, 3, 5. The full matrix properties are given in Table~\ref{tab:density}.}\label{figure:spy}
\end{figure}

In parallel, the increase in density (decrease in sparsity) on coarse-levels correlates
with an increase in parallel communication costs.
Figure~\ref{figure:level_times} shows this by plotting the time spent on each level
in an AMG hierarchy during the solve phase.  The time grows substantially on coarse-levels and
this is almost completely due to increased communication costs
from the decreasing sparsity. The time spent on smaller, coarse problems is
much larger than the time spent working on the original, finest-level
problem.  The test problem is again the 3D Poisson problem, using the hypre~\cite{hypre}
package with Falgout coarsening~\cite{falgout}, extended classical modified interpolation, and hybrid symmetric Gauss-Seidel relaxation.  This problem was run on 2048 processes with $10,000$ degrees-of-freedom per process.
\begin{figure}[!ht]
	\centering
	\includegraphics[width=.49\textwidth]{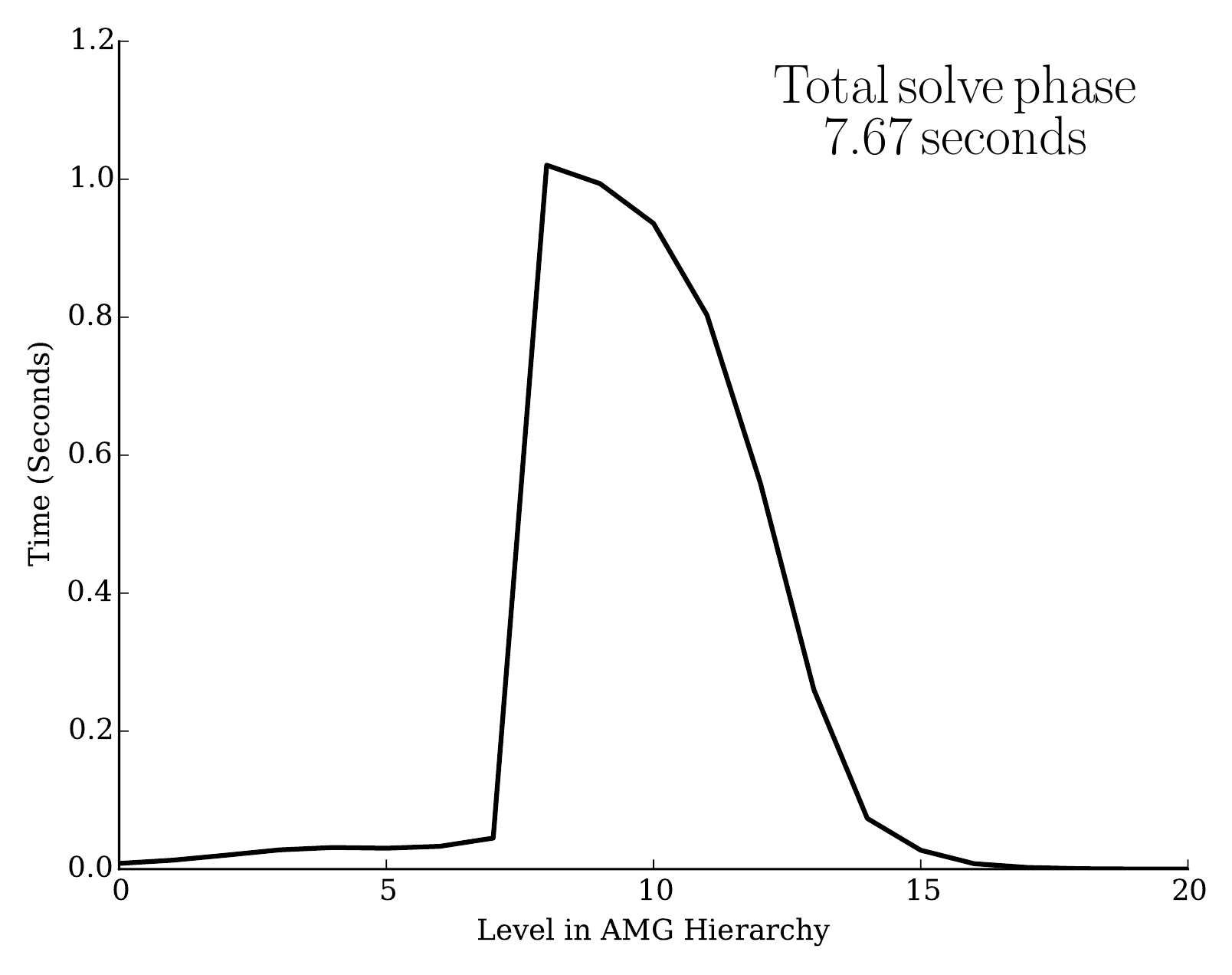}
	\includegraphics[width=.49\textwidth]{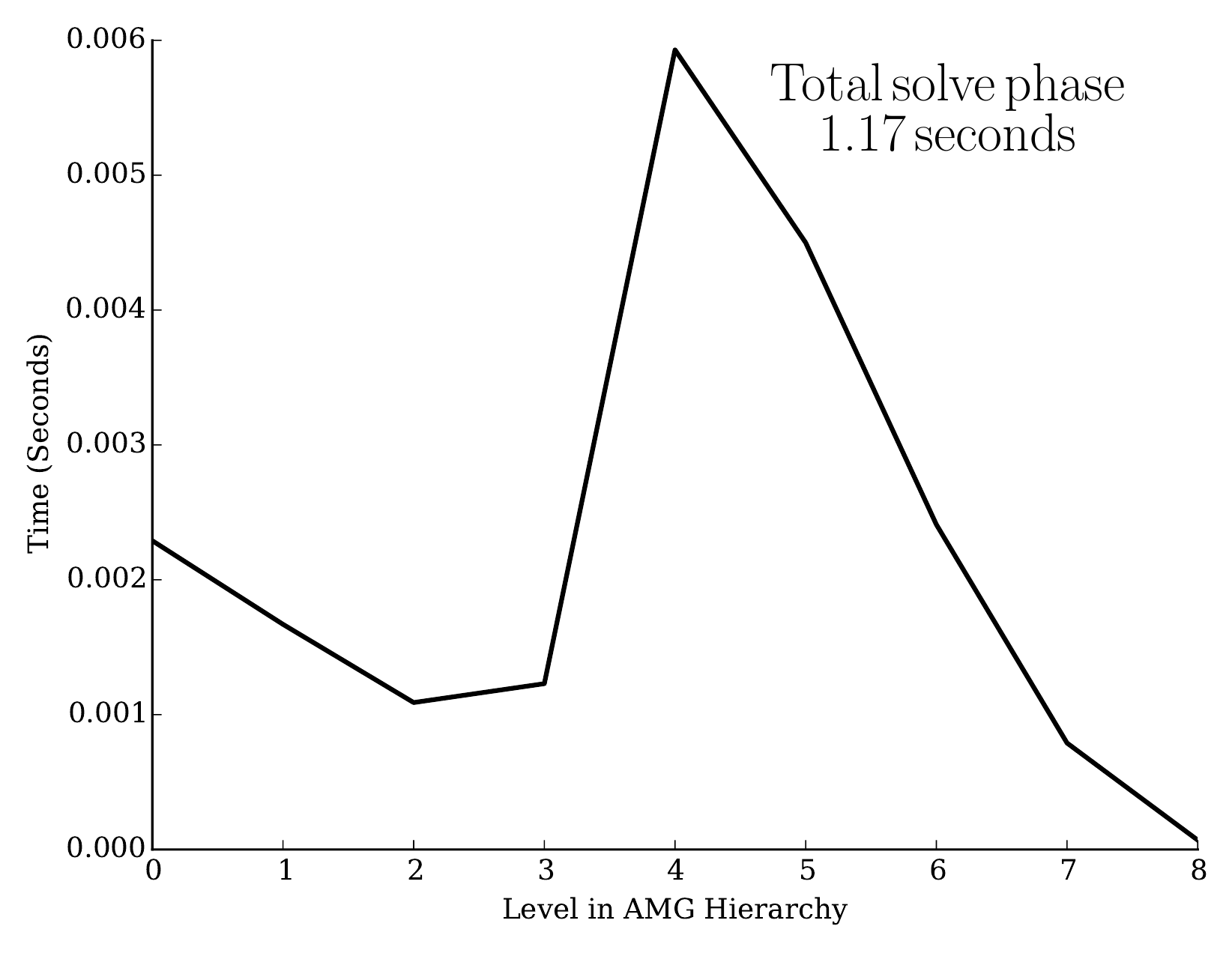}
	\caption{Left: Time spent on each level of the hierarchy during a single iteration of classical parallel AMG for a 3D Poisson
   problem.  Right: repeat experiment, but using aggressive HMIS coarsening.  The total time is much lower; however, the
   qualitative feature of expensive coarse-levels remains.}\label{figure:level_times}
\end{figure}

In this paper we introduce a method for controlling the communication complexity
in AMG\@.  The method increases the
sparsity of only coarse-grid operators ($A_\ell$,
$\ell=1,\dots,\ell_{\text{max}}$) by eliminating entries in $A_\ell$ with no effect on interpolation operators ($P_\ell$, $\ell=0, \ldots, \ell_{\text{max}-1}$).  This
 results in an improved balance
between convergence and per-iteration complexity in comparison to the
standard algorithm.  In addition, we develop an adaptive method which
allows nonzero entries to be reintroduced into the AMG hierarchy, should
the entry elimination heuristic be chosen too aggressively.  This allows for the method
to robustly maintain classic AMG convergence rates.

In the context of this paper, we define \textit{sparsity} and \textit{density} in
terms of the average number of nonzeros per row (or equivalently, the average
degree of the matrix).  In particular, density of a matrix $A_\ell$ of size $n_\ell$
is defined to be $\nnz(A_\ell)/n_\ell$.  The performance of AMG is closely
correlated with this metric, especially communication costs.  In addition, note
that if a matrix $A_\ell$ is ``sparser'' or ``denser'' under this definition, it is
also the case under the more traditional density metric, $\nnz(A_\ell)/n_\ell^2$.
Another advantage is that this measure yields a meaningful comparison between matrices
of different sizes.  For example, a goal of our algorithm is to generate coarse
matrices that have nearly the same sparsity as the fine grid matrix.

There are a number of existing approaches to reduce per-iteration communication complexity at
the cost of convergence.  Aggressive coarsening, such as HMIS~\cite{AggCoarse2} and
PMIS~\cite{AggCoarse2}, rapidly coarsen each level of the hierarchy.  These methods
greatly reduce the complexity of an algebraic multigrid iteration by reducing
both the number and the density of coarse operators.  While these coarsening
strategies reduce the cost of each iteration or cycle in the AMG solve phase,
they do so at the cost of accuracy, often leading to a reduction in
convergence.  Interpolation schemes such as \textit{distance-two
  interpolation}~\cite{DistTwo}, improve the convergence for aggressive coarsening,
but also result in an increase in complexity.

The result of aggressive coarsening and distance-two interpolation
does often lead to a notable reduction in the
time spent on each level in AMG, and a reduction in the total time to solution.
Figure~\ref{figure:level_times} shows that the time per level during the solve phase is reduced in comparison to standard coarsening, even though the same number of processes and problem size per core are used.  The use of HMIS coarsening is the only difference in problem settings between these two runs.  In the case of the simple 3D Poisson problem, there is only a nominal impact on convergence, yielding a large reduction in overall time spent in the solve
phase.  Nonetheless, while aggressive coarsening may reduce the total work
required during an iteration of AMG, the problem of expensive coarse-levels still persists.
For instance in~\cite{NonGal_Schroder}, it is noted that even when using these
best-practice parameters, parallel AMG for the 3D 7-point Poisson problem
produces coarse-grid operators with hundreds of nonzeros per row.  This can
be seen in Figure~\ref{figure:level_times}, where the time per-level
still spikes on coarse levels.

Another strategy for reducing communication complexity in AMG consists of systematically
adding sparsity into the interpolation operators~\cite{DistTwo}.  Removing
nonzeros from the interpolation operators reduces the complexity of the coarse-grid operators.
Modest levels of interpolation truncation are usually beneficial, however,
this process can also yield unpredictable impact on coarse-level performance if
used too aggressively.  Sparsity can alternatively be added into each coarse-grid operator by weighting the prolongation and restriction operators with entries of the appropriate Fourier series~\cite{Bolten2015}.

The typical approach to building coarse-grid operators, $A_{\ell}$, is to form
the Galerkin product with the interpolation operator: $A_{\ell + 1} = P_{\ell}^T
A_{\ell} P_{\ell}$.  This ensures a \textit{projection} in the coarse-grid
correction process and a guarantee on the reduction in error in each iteration
of the AMG solve phase.  On the other hand, the triple matrix product in the
Galerkin construction also leads to the growth in the number of nonzeros in coarse-grid matrices.  As
such, there are several approaches to constructing coarse operators that
do not use a Galerkin product and are termed \textit{non-Galerkin} methods.  These methods have been formed in a classical AMG setting~\cite{NonGal_Schroder} and also in a
smoothed aggregation~\cite{NonGal_Treister} context.  In general, these methods selectively remove
entries from coarse-grid operators, reducing the complexity of the multigrid
cycle.  Assuming the appropriate entries are removed from coarse-grid operators, the result is
a reduction in complexity with little impact on convergence.
However, if essential entries are removed, convergence deteriorates.

An alternative to limiting communication complexity is to directly determine the coarse-grid stencil,
an approach used in geometric multigrid.  For instance, simply
rediscretizing the PDE on a coarse-level results in the same stencil
pattern as for the original finest-grid operator, thus avoiding any increase in the number of nonzers in coarse-grid matrices.  More sophisticated approaches combine geometric and
algebraic information and include BoxMG~\cite{De1982,De1983} and PFMG~\cite{AsFa1996}, where a
stencil-based coarse-grid operator is built.  Additionally,
collocation coarse-grids (CCA)~\cite{CCA} have been used on coarse
levels to effectively limit the number nonzeros.  Yet, all these methods rely on geometric
properties of the problem being solved.  One exception is the extension of
collocation coarse-grids to algebraic multigrid (ACCA)~\cite{ACCA}, which has
shown similar performance to smoothed aggregation AMG\@.

The approach developed in this paper is to form non-Galerkin operator by
modifying \textit{existing} hierarchies.  The novel benefit of the proposed approach is
that it is applicable to most AMG methods, requires no geometric information, and
provides a mechanism for recovery if the dropping heuristic is chosen too
aggressively (see Section~\ref{section:adaptive}).
This paper is outlined as follows.   Section~\ref{section:amg} describes standard algebraic multigrid as
well as the method of non-Galerkin coarse-grids.
Section~\ref{section:sparse_methods} introduces two new methods for reducing the
communication complexity of AMG\@: Sparse Galerkin and Hybrid Galerkin.  Parallel performance
models for these methods are described in Section~\ref{section:perf}, and the
parallel results are displayed in Section~\ref{section:results}.  An adaptive
method for controlling the trade-off between communication complexity and convergence is
described in Section~\ref{section:adaptive}.  Finally,
Section~\ref{section:conclusion} makes concluding remarks.

\section{Algebraic Multigrid}\label{section:amg}

In this section we detail the AMG setup and solve phases, along with the basic
structure of a non-Galerkin method. We let the \textit{fine-grid} operator $A$ be
denoted with a subscript as $A_0$.

Algorithm~\ref{alg:amg_setup} describes the setup phase and begins with
\code{strength}, which identifies the strongly connected
edges\footnote{A degree-of-freedom $i$ is strongly connected to $j$ if algebraically
smooth error varies slowly between them.  Algebraically smooth error is not effectively
reduced by relaxation and has a small Rayleigh quotient~---~i.e., it's low in energy.
Strength information critically informs AMG how to coarsen and how to interpolate.
For more detail, see~\cite{BrMcRu1984, RuStu1987}.}
in the graph of
$A_{\ell}$ to construct the strength-of-connection matrix $S_{\ell}$.  From this,
$P_{\ell}$ is built in \code{interpolation} to interpolate vectors
from level $\ell+1$ to level $\ell$, with the goal to accurately interpolate
algebraically smooth functions.  For classical
AMG, \code{interpolation} first forms a disjoint splitting of the index set
$\{1,\dots,n\} = C \cup F$, where $C$ is a set of so-called coarse
degrees-of-freedom and where $F$ is a set of fine degrees-of-freedom.
The goal is to have algebraically smooth functions on $C$ accurately approximate
such functions on the full set $C \cup F$.  The size
of the coarse-grid is given by $n_{\ell+1} = |C|$, and an interpolation operator $P_{\ell}:
\mathbb{R}^{n_{\ell+1}}\rightarrow \mathbb{R}^{n_{\ell}}$ is constructed using $S_{\ell}$ and $A_{\ell}$
to compute sparse interpolation formulas accurate for algebraically smooth functions.
Finally, the coarse-grid operator is created through a Galerkin triple
matrix-product, $A_{\ell+1} = P_{\ell}^{T}A_{\ell}P_{\ell}$.  In a two-level setting,
this ensures the desirable property that the coarse-grid correction process
$(I - P_{\ell}^T A_{\ell} P_{\ell})A_{\ell}$ is an $A_{\ell}$-orthogonal projection on the error.  When
a non-Galerkin approximation is introduced, this property is lost.  Thus, the
most difficult task for us when designing a non-Galerkin algorithm is to
approximate the Galerkin product well.  If the approximation is poor, the
method can even diverge~\cite{NonGal_Schroder}.
\begin{algorithm}[!ht]
  \DontPrintSemicolon	\KwIn{    \begin{tabular}[t]{l l}
    $A_0$: & fine-grid operator\\
    \code{max\_size}:   & threshold for max size of coarsest problem\\
    $\gamma_{1}$, $\gamma_{2}$, $\dots$ & drop tolerances for each level\\
    \code{nongalerkin}: & (optional) non-Galerkin method
    \end{tabular}
  }
  \;
	\KwOut{    \begin{tabular}[t]{l}
    $A_{1}, \dots, A_{\ell_{\max}}$,\\
    $P_{0}, \dots, P_{\ell_{\max}-1}$
    \end{tabular}
  }
  \;
	\While{$\code{size}(A_\ell) > \code{max\_size}$}{		$S_{\ell}$ = \code{strength}{($A_{\ell}$)}\tcc*[r]{Strength-of-connection of edges}
		$P_{\ell}$ = \code{interpolation}{($A_{\ell}$, $S_{\ell}$)}\tcc*[r]{Construct interpolation and injection}
		$A_{\ell+1}$ = $P_{\ell}^{T}A_{\ell}P_{\ell}$\tcc*[r]{Galerkin product}
    \;
		\If(\tcc*[f]{(optional) described in Section~\ref{section:nongal}}){\code{nongalerkin}}{$A_{\ell+1} = \code{sparsify}(A_{\ell+1}$, $A_{\ell}$, $P_{\ell}, S_{\ell}$, $\gamma_{\ell}$)
\tcc*[r]{Remove nonzeros in $A_{\ell+1}$}
		}
	}
	\caption{\code{amg\_setup}}\label{alg:amg_setup}
\end{algorithm}

The density of each coarse-grid operator $A_{\ell + 1}$ depends on that of the
interpolation operator $P_{\ell}$. Even interpolation operators with modest
numbers of nonzeros typically lead to increasingly dense coarse-grid operators
\cite{NonGal_Schroder, Modeling}.  Algorithm~\ref{alg:amg_setup}
addresses this with the optional step \code{sparsify}, which triggers the
sparsification steps developed in this paper.  The non-Galerkin
approach~\cite{NonGal_Schroder} also fits within this framework.

The solve phase of AMG, described in Algorithm~\ref{alg:amg_solve} as a V-cycle,
iteratively improves an initial guess $x_{0}$ through use of the residual equation $A_0 e_0 = r_0$.
High energy error in the approximate solution is reduced through
relaxation in \code{relax}~---~e.g.\ Jacobi or Gauss-Seidel.  The remaining
error is reduced through coarse-grid correction: a combination of restricting the residual
equation to a coarser level, followed by interpolating and correcting with the resulting approximate error.
The coarsest-grid equation is computed with \code{solve}, using with a direct solution method.
\begin{algorithm}[!ht]
  \DontPrintSemicolon	\KwIn{    \begin{tabular}[t]{l}
    $x_0$, fine-level initial guess\\
    $b$, right-hand side\\
    $A_{1}, \dots, A_{\ell_{\max}}$\\
    $P_{0}, \dots, P_{\ell_{\max}-1}$
    \end{tabular}
  }
  \KwOut{    \begin{tabular}[t]{l}
    $x_0$, fine-level approximation
    \end{tabular}
  }
  \;
  \For{$i=0,\dots,\ell_{\text{max}}-1$}{    $\code{relax}(A_{\ell},x_{\ell},b_{\ell})$\tcc*[r]{Pre-smooth}
    $r_{\ell + 1} = P_{\ell}^T(b_{\ell} - A_{\ell}x_{\ell})$\tcc*[r]{Restrict residual}
  }
  $x_{\ell_{\text{max}}} = \code{solve}(A_{\ell_{\text{max}}}$, $r_{\ell_{\text{max}}})$\tcc*[r]{Coarsest-level direct solve}
  \For{$i=\ell_{\text{max}}-1,\dots,0$}{    $x_{\ell} = x_{\ell} + P_{\ell}x_{\ell + 1}$\tcc*[r]{interpolate and correct}
    $\code{relax}(A_{\ell},x_{\ell},b_{\ell})$\tcc*[r]{Post-smooth}
  }
	\caption{\code{amg\_solve}}\label{alg:amg_solve}
\end{algorithm}

The dominant computation kernel in Algorithm~\ref{alg:amg_solve} is the
sparse matrix-vector (SpMV) product, found in \code{relax} and
interpolation/restriction.  Typically relaxation dominates since
$A_{\ell}$ is larger and denser than $P_{\ell}$.  Thus, the performance on level $\ell$ of the
solve phase depends strongly on the performance of a single SpMV with $A_{\ell}$.

When performing parallel sparse matrix operations, a matrix $A$ is distributed across
processes in a row-wise partition, as shown in Figure~\ref{figure:processor}.
The local portion of the matrix is split into two groups: the diagonal
block, containing all columns of $A$ that correspond to local element of the vector;
and the off-diagonal block, corresponding to elements of the vector that are
stored on other processes.  For a SpMV, all off-process elements in the vector that correspond
to matrix nonzeros must be communicated.
Therefore, the density of a matrix
contribute to the cost of communication complexity in the SpMV operation.  This implies that the
decrease in sparsity on AMG coarse-levels leads to large communication costs and
often results in an inefficient solve phase~\cite{NonGal_Schroder, Modeling}.
\begin{figure}[!ht]
  \centering
	\includegraphics[width=.47\textwidth]{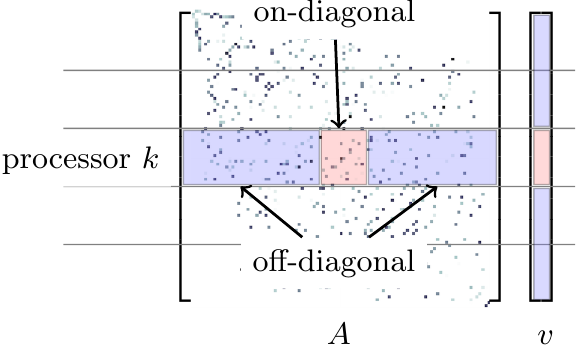}
\caption{Matrix $A$ and vector $v$ distributed across processes in a row-wise
partition.}\label{figure:processor}
\end{figure}

\subsection{Method of non-Galerkin coarse-grids}\label{section:nongal}

In this section we give an overview of the method~\cite{NonGal_Schroder} which
constructs hierarchies with coarse
operators that do not satisfy the Galerkin relationship where $A_{\ell+1} =
P_{\ell}^{T}A_{\ell}P_{\ell}$ for each level $\ell$.  The method of
\textit{non-Galerkin coarse-grids} forms the coarse-grid
operator through the Galerkin product, but then uses a sparsification step that generates
$\shat{A}_{\ell+1}$~---~see the call to \code{sparsify} in
Algorithm~\ref{alg:amg_setup}. As motivated in the previous section, fewer
nonzeros in the coarse-grid operator reduce the communication requirements.
The sparser matrix $\shat{A}_{\ell+1}$ replaces $A_{\ell+1}$ and is then used when
forming the remaining levels of the hierarchy, creating a dependency between
$\shat{A}_{\ell+1}$ and all successive levels as shown in
Figures~\ref{fig:depgal}~and~\ref{fig:depngal}.  Thus, this approach does not
preserve a coarse-grid correction corresponding to an $A$-orthogonal projection,
as described in Section~\ref{section:amg}.

In the following we use \code{edges}$(A)$, for a sparse matrix $A$, to represent
the set of edges in the graph of $A$.  That is, $\code{edges}(A) = \{(i,j)
\,\text{such that}\, A_{i,j} \ne 0\}$, where $A_{i,j} = {\left( A
\right)}_{i,j}$ is the ${(i,j)}^{\text{th}}$ entry of $A$.  In addition, we
denote $\shat{P}_{\ell}$ as the \textit{injection} interpolation operator that injects from
level $\ell+1$ to the $C$ points on level $\ell$ so that $\shat{P}_\ell$ is
defined as the identity over the coarse points.

The \code{sparsify} method for reducing the nonzeros in a matrix is described
in Algorithm~\ref{alg:sparsify}.  Here, the method selectively removes
small entries outside a minimal sparsity pattern\footnote{The goal of the
minimal sparsity pattern is to maintain, at the minimum, a stencil as wide for
the coarse-grid as exists for the fine-grid.  This is a critical heuristic for achieving
spectral equivalence between the sparsified operator and the Galerkin operator.
The current $\mathcal{M}$ achieves this in many cases.  It is possible in some
cases to reduce $\mathcal{M}$ further.  See~\cite{NonGal_Schroder} for more
details.} given by $\mathcal{M}_{\ell}$
where $\code{edges}(\mathcal{M}_{\ell}) =
\code{edges}(\shat{P}_{\ell}^{T}A_{\ell}P_{\ell} +
P_{\ell}^{T}A_{\ell}\shat{P}_{\ell})$.  For notational convenience, we set $A$
to $A_{\ell}$ as well as other operators in the algorithm.  For a given
tolerance $\gamma$, any entry $A_{i,j}$ with $(i,j) \notin \mathcal{M}$
and $\left\lvert A_{i,j} \right\rvert < \gamma \max_{k \neq i} \left\lvert
A_{i,k} \right\rvert$ is considered insignificant and is removed.  When entry
$A_{i,j}$ is removed, the value of $A_{i,j}$ is lumped to other entries that are
strongly connected to $A_{i,j}$, and $A_{i,j}$ is set to zero.  This reduces the
per-iteration communication complexity and heuristically targets spectral equivalence
between the sparsified operator and the Galerkin operator~\cite{NonGal_Schroder,
NonGal_Treister}.

There is a trade-off between the communication requirements and the convergence
rate.  Each entry in the matrix has a communication cost that is
dependent on the number of network links that the corresponding message travels in addition
to network contention. In addition, each entry in the matrix also
influences convergence of AMG, with large entries generally having larger impact (although this is not uniformly the case).
Any entry that has an associated communication cost
outweighing the impact on convergence should be removed. However, while it is
possible to
predict this communication cost based on network topology and message size, the
entry's contribution to convergence cannot be easily predetermined. When
dropping via
non-Galerkin coarse-grids, if the chosen drop tolerance is too large, too many
entries are removed and convergence deteriorates. Because the ideal drop
tolerance is problem dependent and cannot be
predetermined, it is likely that the chosen drop tolerance is suboptimal.
\begin{algorithm}[!th]
	\caption{\code{sparsify} from~\cite{NonGal_Schroder}}\label{alg:sparsify}
  \DontPrintSemicolon  \vspace{0.15cm}
	\KwIn{    \begin{tabular}[t]{l l}
      $A_{c}$ & coarse-grid operator\\
      $A$     & fine-grid operator\\
      $P$     & interpolation\\
      $\shat{P}$ & injection\\
      $S$ & classical strength matrix\\
      $\gamma$ & sparse dropping threshold parameter
    \end{tabular}
    }
    \vspace{0.15cm}
	\KwOut{$\shat{A}_{c}$, a sparsified $A_{c}$}
  \;
  $\shat{P} \leftarrow \code{form-injection()}$\;
  $\mathcal{M} = \code{edges}(\shat{P}^{T}AP + P^{T}A\shat{P})$\tcc*[r]{Edges in the minimal sparsity pattern}
	$\mathcal{N} = \emptyset$\tcc*[r]{Edges to keep in $A_c$}
	$\shat{A}_{c} = {\bf 0}$\tcc*[r]{Initialize sparsified $A_c$}
  \;
	\For{${\left( A_c \right)}_{i,j} \neq 0$}{    \If{$(i,j) \in \mathcal{M}$ {\code{or}} $\left\lvert {\left( A_c \right)}_{i,j} \right\rvert \geq \gamma \max_{k \neq i} \left\lvert {\left( A_c \right)}_{i,k} \right\rvert$}{           $\mathcal{N} \leftarrow \mathcal{N} \cup \{(i,j), (j,i)\}$\tcc*[f]{Add strong edges or the required pattern}
		}
	}
  \;
  \tikzmark{a}\For{${\left( A_c \right)}_{i,j} \neq 0$}{		\eIf{$(i,j) \in \mathcal{N}$}{          \vspace{0.15cm}
			${\left( \shat{A}_c \right)}_{i,j} = {\left( \vphantom{\shat{A}_c}A_c \right)}_{i,j}$\;
		}{			$\mathcal{W} = \{k \,|\, S_{j,k} \neq 0, (i,k) \in \mathcal{N}\}$\tcc*[r]{Find strong neighbors in the keep list}
			\For{$k \in \omega$}{        $\alpha = \frac{\left\lvert S_{j,k} \right\rvert}{\sum_{m \in \mathcal{W}}\left\lvert S_{j,m} \right\rvert}$\tcc*[r]{Relative strength to $k$}
				${\left( \shat{A}_c \right)}_{i,k} \leftarrow {\left(\shat{A}_c \right)}_{i,k} + \alpha {\left( A_c \right)}_{i,j}$\;
				${\left( \shat{A}_c \right)}_{k,i} \leftarrow {\left(\shat{A}_c \right)}_{k,i} + \alpha {\left( A_c \right)}_{i,j}$\;
				${\left( \shat{A}_c \right)}_{k,k} \leftarrow {\left(\shat{A}_c \right)}_{k,k} - \alpha {\left( A_c \right)}_{i,j}$\;
			}
		}
	}
\end{algorithm}
\begin{algorithm}[!th]
    \renewcommand{\thealgocf}{\arabic{algocf}b}
    \caption*{Diagonal Lumping -- Alternative {\bf for} loop (\textcolor{red}{\S~\ref{section:diaglump}})}      \DontPrintSemicolon    \tikzmark{b}\For{${\left( A_c \right)}_{i,j} \neq 0$}{        \nl\label{alg:ismax}$\code{ismax} \leftarrow \left\lvert {\left( A_c \right)}_{i,j} \right\rvert = \max_{k \neq i} \left\lvert {\left( A_c \right)}_{ik} \right\rvert \,\code{and}\, (i,k) \notin \mathcal{N} \,\forall\, k \neq i \,\code{and}\, \sum_{j}{A_{i,j}} = 0$\;
        \eIf(\tcc*[f]{Keep if entry is the single, maximum nonzero}){$(i,j) \in \mathcal{N}$ \code{or} \code{ismax}}{          \vspace{0.15cm}
        \nl${\left( \shat{A}_c \right)}_{i,j} = {\left( A_c \right)}_{i,j}$\label{alg:dropa}
      }(\tcc*[f]{Otherwise add to the diagonal}){${\left( \shat{A}_c \right)}_{i,i} \leftarrow {\left( \shat{A}_c \right)}_{i,i} +
  {\left( A_c \right)}_{i,j}$\;
\begin{tikzpicture}[remember picture,overlay]
\coordinate (aa) at ($(a)+(-0.5,-.5)$);
\coordinate (bb) at ($(b)+(-0.5,0.0)$);
\draw[->, red, thick] (a) -- (aa) -- (bb) -- (b);
\end{tikzpicture}
      }
    }
\end{algorithm}

Figures~\ref{figure:motivate_conv}~and~\ref{figure:motivate_num_sends} show the
convergence and communication complexity, respectively, of various AMG hierarchies for
solving a 3D Poisson problem with the method of non-Galerkin coarse-grids.
The original Galerkin hierarchy converges in the fewest number of iterations,
but has the highest communication complexity.  Non-Galerkin removes an ideal number of
nonzeros from coarse-grid operators (labeled \textit{ideal}) when no entries are
removed from the first coarse level, and all successive levels have a drop tolerance
of 1.0.  In this case, the communication complexity of the
solver is greatly reduced with little effect on convergence.  However, if the first coarse
level is also created with a drop tolerance of 1.0,
essential entries are removed (labeled \textit{too many}).  While the complexity of
the hierarchy is further reduced, but the method fails to converge.
\begin{figure}[!ht]
	\centering
	\begin{subfigure}[b]{0.49\textwidth}
		\centering
		\includegraphics[width=\textwidth]{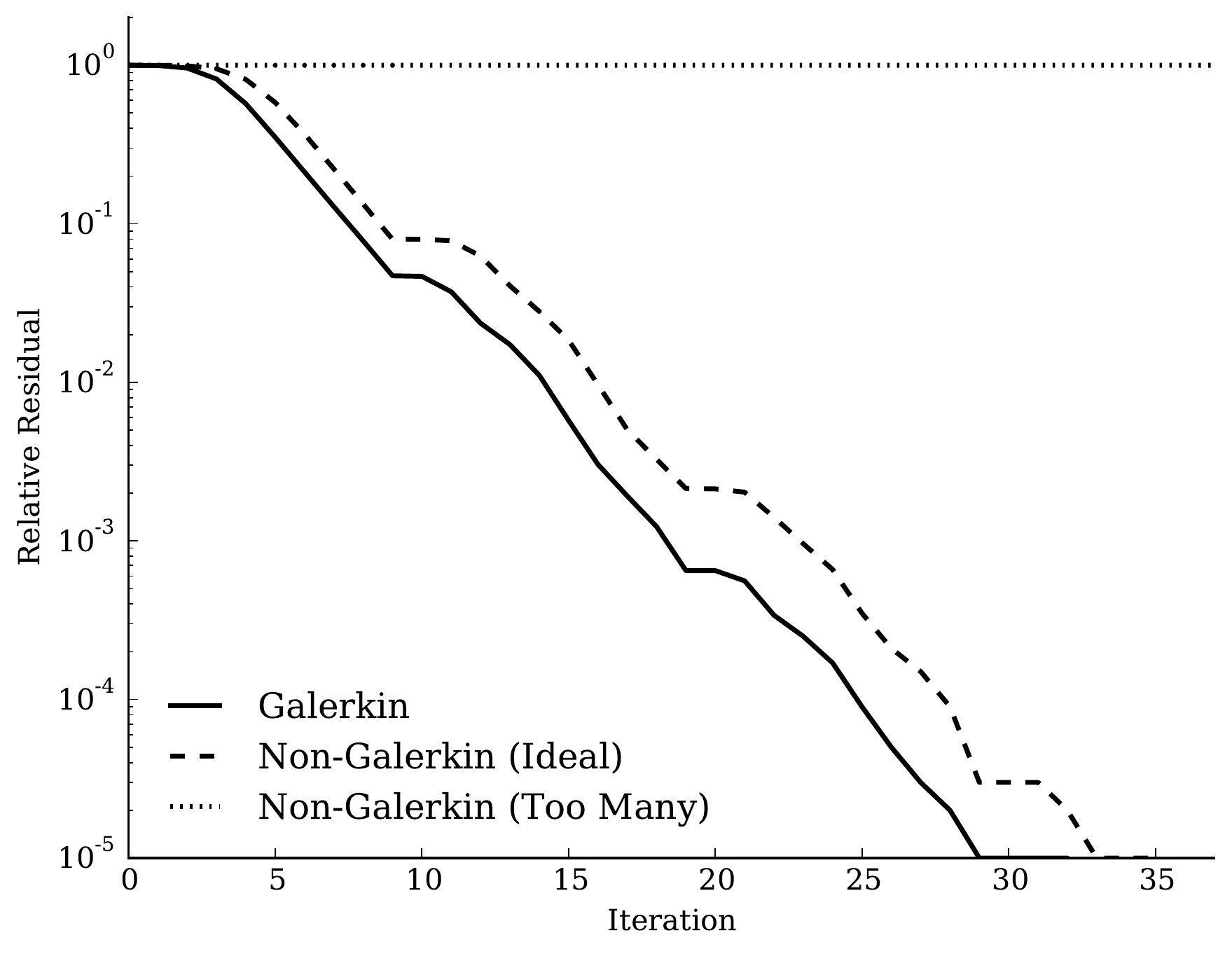}
		\caption{Relative residual per iteration}\label{figure:motivate_conv}
	\end{subfigure}
	\begin{subfigure}[b]{0.49\textwidth}
		\centering
		\includegraphics[width=\textwidth]{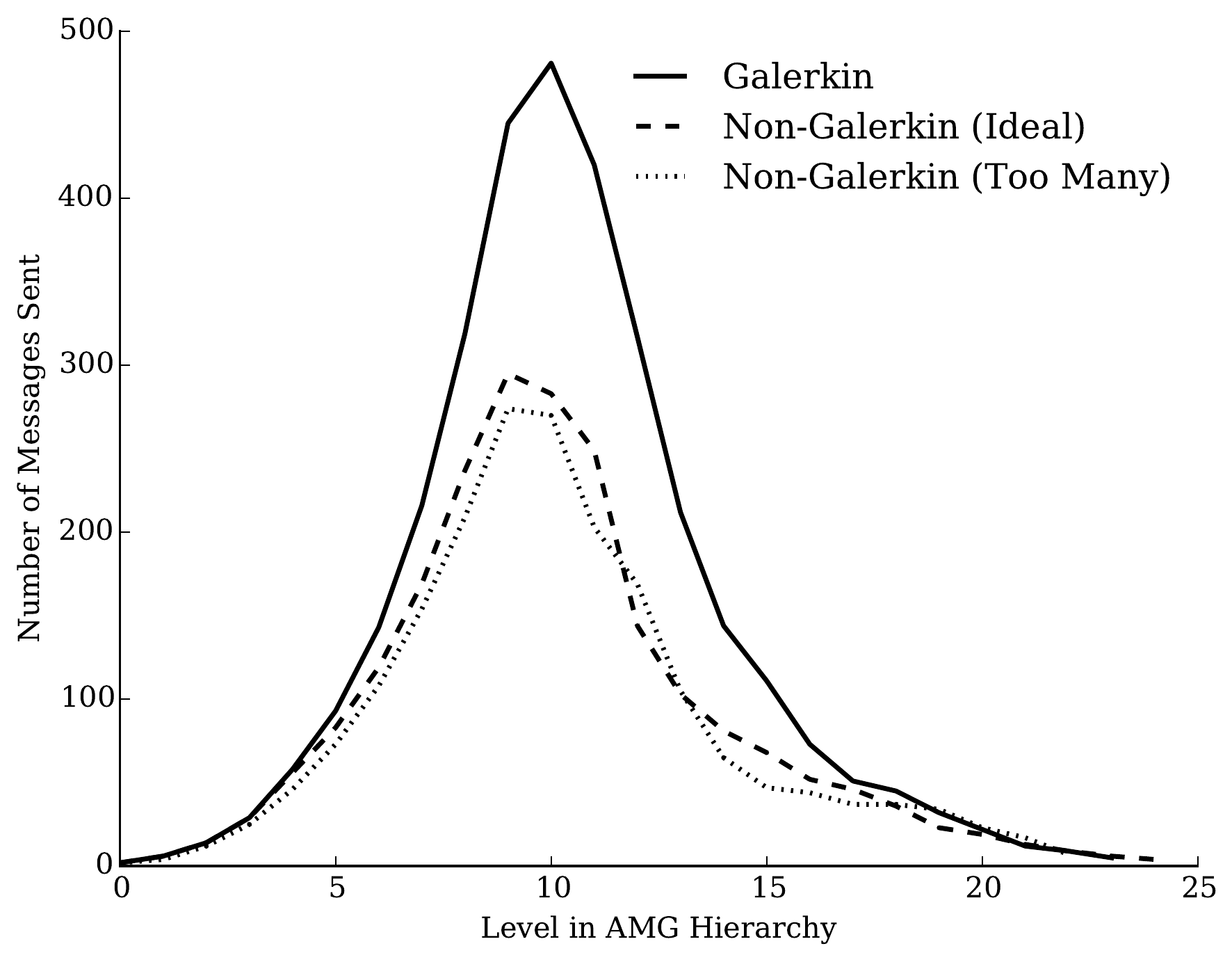}
		\caption{Number of MPI sends per level}\label{figure:motivate_num_sends}
	\end{subfigure}
	\caption{Adding sparsity to AMG hierarchy}
\end{figure}

If a large drop tolerance is chosen for non-Galerkin AMG, the effect on
convergence can be determined after one or two iterations of the solve phase.
At this point, if convergence is poor, eliminated entries can be re-introduced into the
matrix.  However, with this method, convergence improvements cannot be
guaranteed.  As shown in Algorithm~\ref{alg:amg_setup}, sparsifying on a
level affects all coarser-grid operators.  Hence, adding entries back into the
original operator does not influence the impact of their
removal on all coarser levels.
Figure~\ref{figure:nongalerkin_fixed} shows how re-adding entries
is ineffective by plotting the required communication costs verses the achieved
convergence for both
Galerkin and non-Galerkin AMG solve phases for the same 3D Poisson problem.
The data set \textit{Non-Galerkin (added back)} is generated by
removing entries with a drop tolerance of 1.0 (everything outside of
$\mathcal{M}$) on the first coarse-grid operator and 0.0 (retaining everything) on all
successive levels.  This results in a non-convergent method.  We then add
these removed entries back into
the first coarse-grid operator, but this does not reintroduce the entries which were removed from
coarser grid operators as a result of the non-Galerkin triple-matrix product
$P^{T}_{\ell}\shat{A_{\ell}}P_{\ell}$.  Figure~\ref{figure:nongalerkin_fixed}
shows that this hierarchy requires little
coarse-level communication after all entries have been reinstated to the first
coarse-grid operator.  However as the required entries are not added back into all coarser
grid operators, the method still fails to converge.\\
\begin{figure}[ht!]
	\centering
	\includegraphics[width=0.51\textwidth]{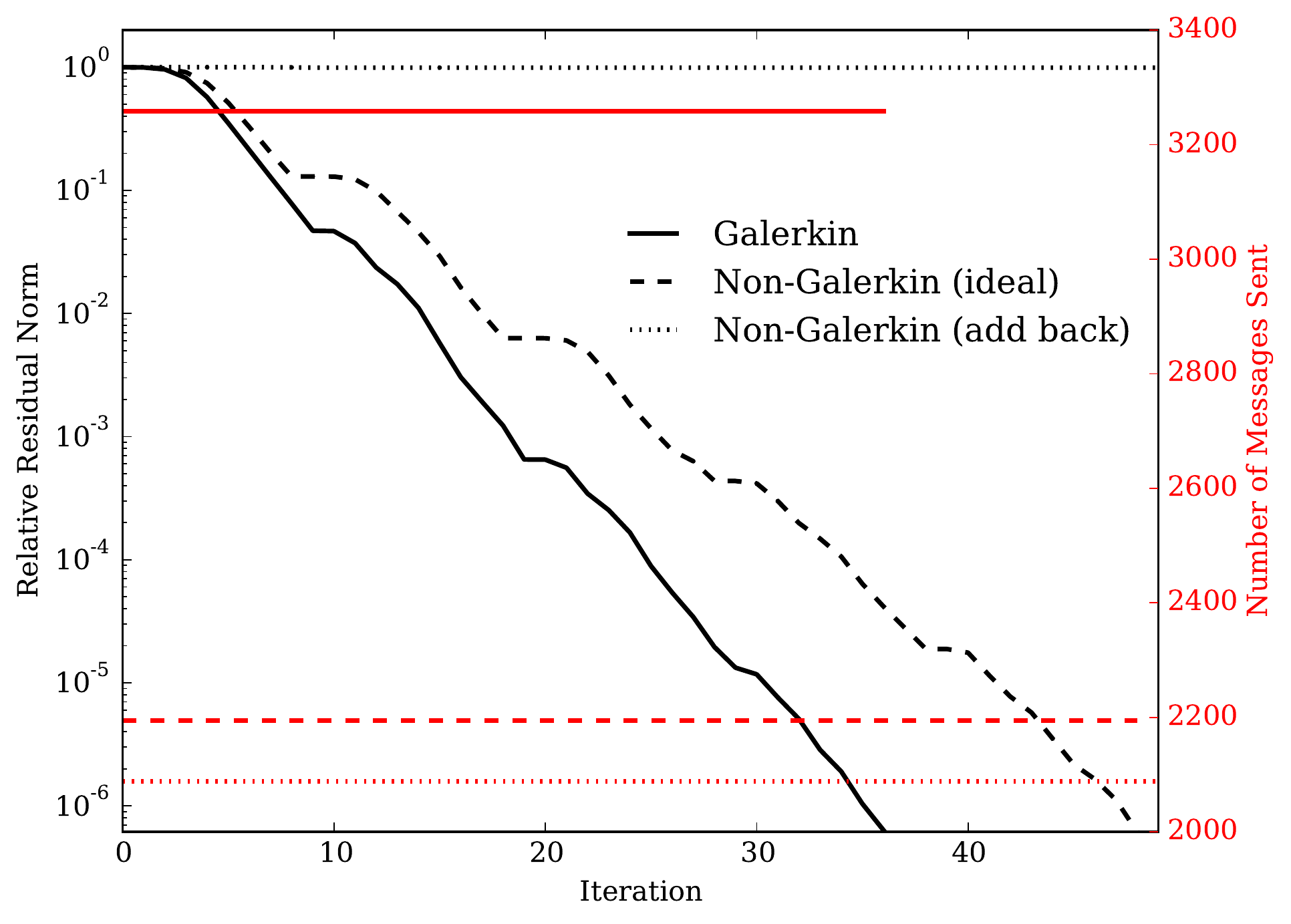}
	\includegraphics[width=0.47\textwidth]{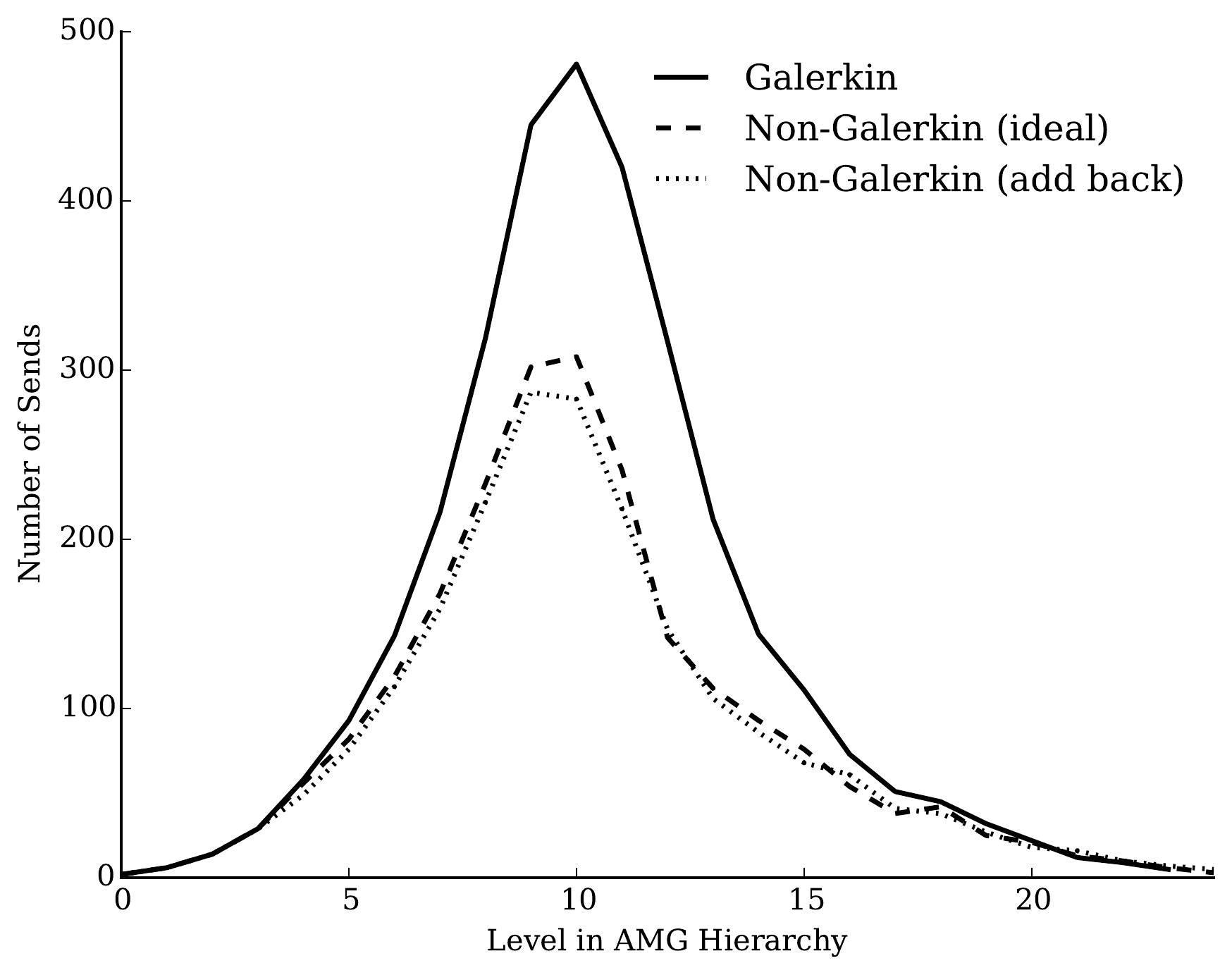}
	\caption{Convergence vs.\ communication of Galerkin and non-Galerkin
	hierarchies for the {\bf Poisson} problem.  Relative residual per AMG
	iteration (black) vs the number of MPI sends per iteration (red) (left),
	and number of sends per level in AMG hierarchy
	(right)}\label{figure:nongalerkin_fixed}
\end{figure}

\section{Sparse and Hybrid Galerkin approaches}\label{section:sparse_methods}

In this section we present two methods as alternatives to the method of
non-Galerkin coarse-grids.  The methods consist of forming the entire Galerkin hierarchy
before sparsifying each operator, yielding a lossless approach for increasing
sparsity in the AMG hierarchy.  The first method, which is called the Sparse
Galerkin method is described in Algorithm~\ref{alg:sparse_hybrid} (see
Line~\ref{alg:callsparse}).  Sparse Galerkin creates the entire Galerkin
hierarchy as usual.  The hierarchy is then thinned as a post-processing step to
remove relatively small entries outside of the minimal sparsity pattern
$\mathcal{M} = \shat{P}^{T} A P + P^{T}A\shat{P}$ using \code{sparsify}.
\begin{algorithm}[!ht]
	\caption{\code{sparse\_hybrid\_setup}}\label{alg:sparse_hybrid}
  \DontPrintSemicolon	\KwIn{    \begin{tabular}[t]{l l}
      $A_0$ & fine-grid operator\\
      \code{max\_size}:       & threshold for max size of coarsest problem\\
      $\gamma_{1}$, $\gamma_2$, $\dots$ & drop tolerances for each level\\
      \code{sparse\_galerkin} & Sparse Galerkin method\\
      \code{hybrid\_galerkin} & Hybrid Galerkin method\\
    \end{tabular}}
  \vspace{0.15cm}
	\KwOut{$\shat{A}_{1}, \dots, \shat{A}_{\ell_{\max}}$}
  \;
$A_{1}, \dots, A_{\ell_{\text{max}}}, P_{0}, \dots, P_{\ell_{\text{max}}-1}
 =\code{amg\_setup}(A_{0}, \code{max\_size}, \code{False})$\;   	$\shat{A}_{0} = A_{0}$\;
  	\For{$\ell \leftarrow 1$ \KwTo\ $\ell_{\text{max}}$}{  		\If{\code{sparse\_galerkin}}{        \nl\label{alg:callsparse}$\shat{A}_{\ell+1}$ = \code{sparsify}{($A_{\ell+1}$, $A_{\ell}$, $P_{\ell}$,
$S_{\ell}$, $\gamma_{\ell}$)}\tcc*[r]{Increase using the Sparse Method}
		}
		\ElseIf{\code{hybrid\_galerkin}}{      \nl\label{alg:callhybrid}$\shat{A}_{\ell+1}$ = \code{sparsify}{($A_{\ell+1}$, $\shat{A}_{\ell}$,
$P_{\ell}$, $S_{\ell}$, $\gamma_{\ell}$)}\tcc*[r]{Increase using the Hybrid Method}
		}
  	}
\end{algorithm}

The second method that we introduce is called Hybrid Galerkin since it combines
elements of Galerkin and Sparse Galerkin to create the final hierarchy.  The
method is again lossless, and is outlined in
Algorithm~\ref{alg:sparse_hybrid} (see Line~\ref{alg:callhybrid}).  After
the Galerkin hierarchy is formed, small entries outside are removed, this time
using a modified, minimal sparsity pattern of $\mathcal{M} =
\shat{P}^{T}\shat{A}P + P^{T}\shat{A}\shat{P}$.

The Sparse and Hybrid Galerkin methods retain the structure of the original
Galerkin hierarchy.  Consequently, these methods introduce error only into
relaxation and residual calculations.
The remaining components of each V-cycle in the solve
phase (see \code{amg\_solve}), such as restriction and interpolation are left
unmodified.  Therefore, the grid transfer operators do not depend on any
sparsification, as shown in Figure~\ref{figure:dependencies}.  Here, we see
that the Sparse Galerkin method does not use the modified (or sparsified)
operators to create the next coarse-grid operator in the hierarchy.
Conversely, Hybrid Galerkin uses the newly modified operator to compute the sparsity
pattern $\mathcal{M}$ for the next coarse-grid operator;
this process does not impact interpolation.
\begin{figure}[ht!]
	\begin{subfigure}[b]{0.24\textwidth}
		\centering
		\includegraphics[height=1.3in]{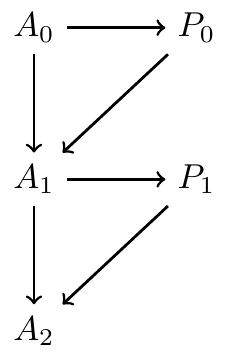}
		\caption{Galerkin}\label{fig:depgal}
	\end{subfigure}
	\begin{subfigure}[b]{0.24\textwidth}
		\centering
		\includegraphics[height=1.3in]{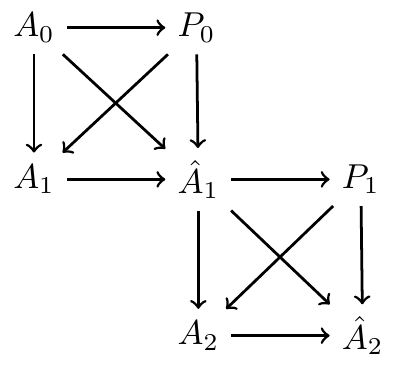}
		\caption{Non-Galerkin}\label{fig:depngal}
	\end{subfigure}
	\begin{subfigure}[b]{0.24\textwidth}
		\centering
\includegraphics[height=1.3in]{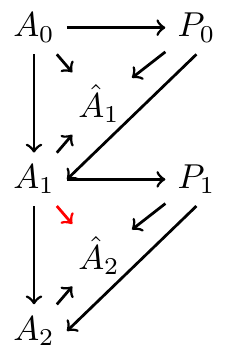}
\caption{Sparse Galerkin}\label{fig:depsp}
	\end{subfigure}
	\begin{subfigure}[b]{0.24\textwidth}
		\centering
\includegraphics[height=1.3in]{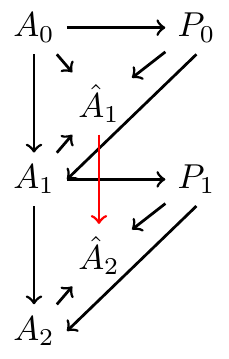}
\caption{Hybrid Galerkin}\label{fig:dephyb}
	\end{subfigure}
	\caption{Dependencies for forming each operator in the various AMG
	hierarchies.  The difference between Sparse and Hybrid Galerkin
	dependencies is highlighted in red.}\label{figure:dependencies}
\end{figure}

The new Sparse Galerkin and Hybrid Galerkin methods reduce the per-iteration
cost in the AMG solve cycle as less communication is required by each sparse,
coarse-grid operator.  However, high-energy error may also be relaxed at a
slower rate, yielding a reduction in the convergence factor. As a result, the
solve phase is more efficient when the reduction in communication outweighs the
change in convergence factor.

Similar to the method of non-Galerkin, it is difficult to predict
the impact of removing entries from $A_{c}$ on the relaxation
process.  However, as the structure of the Galerkin hierarchy is retained, the
convergence factor of the solve phase can be controlled on-the-fly. In our approach,
differences
between $A_{c}$ and $\shat{A_{c}}$ are stored while forming the sparse
approximations.
Subsequently, if the convergence factor falls below a
tolerance, entries can be reintroduced into the hierarchy,
allowing improvement of the convergence factor up to that of the original
Galerkin hierarchy~(see Section~\ref{section:adaptive}).

\subsection{Diagonal Lumping}\label{section:diaglump}

A significant amount of work is required in Algorithm~\ref{alg:sparsify}
to increase the sparsity of each coarse operator.  When forming non-Galerkin
coarse-grids, this additional setup cost is hidden by the reduced cost of the
sparsified triple matrix product $A_{c} = P^{T}\shat{A}P$.  However, as the
entire Galerkin hierarchy is initially formed as usual in our new methods
(Algorithm~\ref{alg:sparse_hybrid}) the additional work greatly reduces
the scalability of the setup phase, as shown in
Section~\ref{section:results_setup}.  This significant cost suggests using an
alternative method for sparsification of coarse-grid operators. When reducing the number of nonzeros from
coarse-grid operators with Sparse Galerkin or with
Hybrid Galerkin, the structure of the Galerkin hierarchy remains intact,
allowing a more flexible treatment of increasing sparsity in the matrix.  For
instance, one option is to remove entries by lumping to the diagonal rather
than strong neighbors, as described in
Algorithm~\ref{alg:sparsify}b.  This variation of \code{sparsify} is
beneficial for several reasons, including a much cheaper setup phase when compared to
Algorithm~\ref{alg:sparsify},
potential to reduce the cost of the solve phase, reduced storage
constraints for adaptive solve phases (see Algorithm~\ref{alg:adaptive}),
and retaining positive-definiteness of coarse operators.

Algorithm~\ref{alg:sparsify}b replaces the {\bf for} loop in
Algorithm~\ref{alg:sparsify}.  For each nonzero entry in the matrix, the
algorithm first checks if the entry is the maximum element in the row and if
all other entries in the row are selected for removal (see
Line~\ref{alg:ismax}).  In this case, the nonzero entry is not removed if there
is a zero row sum.

The method of diagonal lumping (Algorithm~\ref{alg:sparsify}b) results in a
significantly cheaper setup phase than Algorithm~\ref{alg:sparsify}.
The original non-Galerkin \code{sparsify} requires each
removed entry to be symmetrically lumped to significant neighbors.  As a result, the process of calculating
the associated strong connections requires a large amount of computation.
Furthermore, to maintain symmetry, all matrix entries that are not stored
locally must be updated, requiring a significant amount of interprocessor communication.
Lumping these entries to the diagonal eliminates both the computational and
communication complexities.

Eliminating the requirement of lumping to strong neighbors yields
potential for removing a larger number of entries from the hierarchy, further
reducing the communication costs of the solve phase.  The original version of
Algorithm~\ref{alg:sparsify} requires that an entry must have strong
neighbors to be removed, as its value is lumped to these neighbors.

While relaxing the restrictions of the original non-Galerkin \code{sparsify}
provides more opportunity to remove entries from the matrix, the diagonal
lumping also negatively influences convergence in some cases.  However, during
the solve phase, if convergence suffers, entries can be easily reintroduced into the
hierarchy, improving convergence.  As removed entries are only added to the
diagonal, the storage of both the sparse matrix along with removed entries is
minimal. In addition, these entries can be restored simply by adding their
values to the original positions, and subtracting these values from the
associated diagonal entries as shown in Algorithm~\ref{alg:adaptive}.  The
process of reintroducing these entries requires no interprocessor communication as well as
a low amount of local computational work.

Diagonal lumping also
preserves matrix properties such as symmetric positive-definiteness (SPD).  As
described in the following theorem, if the sparsity of a diagonally dominant,
SPD matrix
is increased using diagonal lumping, the resulting matrix remains
SPD\@. Consequently, Sparse and Hybrid Galerkin with diagonal lumping can be
used
in preconditioning many methods such as conjugate gradient.
It is important to note, that while SPD matrices are an attractive property for AMG, AMG methods do not guarantee diagonally dominance of the coarse-grid operators.  Yet, in many instances this property is preserved, for example for more standard elliptic operators.
\begin{theorem}\label{thm:spd}
Let $A$ be SPD and diagonally dominant.  If $\shat{A}$ is produced
by Algorithm~\ref{alg:sparsify}b, then it is
symmetric positive semi-definite and diagonally dominant.
\end{theorem}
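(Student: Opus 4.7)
The plan is to reduce the claim to the classical fact that any real symmetric matrix that is diagonally dominant with non-negative diagonal entries is positive semi-definite (an immediate consequence of Gershgorin's theorem applied to a symmetric matrix). So after establishing symmetry, I only need to verify that diagonal dominance and non-negativity of the diagonal carry over from $A$ to $\hat{A}$.

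First I would verify symmetry. The minimal pattern $\mathcal{M}$ is symmetric because it is the edge set of $\shat{P}^{T}AP + P^{T}A\shat{P}$, and the ``keep list'' $\mathcal{N}$ is explicitly symmetrized in the first loop by always adjoining both $(i,j)$ and $(j,i)$. Together with the symmetry of $A$, this ensures that the set of removed off-diagonals is symmetric, so $\hat{A}_{i,j} = \hat{A}_{j,i}$ for every $i \ne j$. (The \code{ismax} exception needs a small additional argument; I would handle it by noting that the condition depends only on row-$i$ magnitudes and a vanishing row sum, and by appealing to a consistent symmetric interpretation of the test.)

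Next I would verify diagonal dominance. Let $R_{i} = \{\, j \ne i : A_{i,j}\ \text{removed}\,\}$ denote the columns lumped into the diagonal of row $i$, and $K_{i}$ the kept off-diagonal indices. Then
\begin{equation*}
\hat{A}_{i,i} \;=\; A_{i,i} + \sum_{j \in R_{i}} A_{i,j},
\qquad
\sum_{j \ne i} |\hat{A}_{i,j}| \;=\; \sum_{j \in K_{i}} |A_{i,j}|.
\end{equation*}
Using $A_{i,j} + |A_{i,j}| \ge 0$ for each $j \in R_{i}$, together with the hypothesis $A_{i,i} \ge \sum_{j \ne i}|A_{i,j}|$, I would compute
\begin{equation*}
\hat{A}_{i,i} - \sum_{j \ne i}|\hat{A}_{i,j}|
\;=\; \Big(A_{i,i} - \sum_{j\ne i}|A_{i,j}|\Big) \;+\; \sum_{j \in R_{i}}\big(A_{i,j} + |A_{i,j}|\big) \;\ge\; 0,
\end{equation*}
which is row-wise diagonal dominance for $\hat{A}$. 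The same rearrangement gives $\hat{A}_{i,i} \ge \sum_{j \in R_{i}}(|A_{i,j}| + A_{i,j}) \ge 0$, so the diagonal remains non-negative. An equivalent intuitive check is that diagonal lumping preserves row sums while only deleting nonnegative quantities $|A_{i,j}|$ from the right-hand side.

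Finally, combining symmetry with diagonal dominance and a non-negative diagonal, Gershgorin's theorem places every eigenvalue of $\hat{A}$ in $[\,0,\, 2\hat{A}_{i,i}\,]$ for some $i$, yielding symmetric positive semi-definiteness. The main obstacle is really the symmetry bookkeeping in the \code{ismax} branch of Algorithm~\ref{alg:sparsify}b; once the removal pattern is shown to be symmetric, the PSD conclusion follows quickly from the one-line inequality above, so the core algebraic content is light and the argument is essentially a careful audit of the row-sum identity produced by diagonal lumping.
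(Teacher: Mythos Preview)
Your argument is essentially the same as the paper's: both verify symmetry via the symmetry of $\mathcal{N}$ and then use Gershgorin's theorem, with your all-at-once inequality $\hat{A}_{i,i}-\sum_{j\ne i}|\hat{A}_{i,j}|=(A_{i,i}-\sum_{j\ne i}|A_{i,j}|)+\sum_{j\in R_i}(A_{i,j}+|A_{i,j}|)\ge 0$ being exactly the summed form of the paper's entry-by-entry ``center shifts, radius shrinks'' disc bookkeeping. You are in fact more careful than the paper in flagging that the \code{ismax} branch of Algorithm~\ref{alg:sparsify}b is a row-wise test and could in principle threaten symmetry of the kept pattern; the paper's proof silently ignores this case, so your instinct to call it out is good, though your proposed resolution (``a consistent symmetric interpretation of the test'') is no more rigorous than the paper's omission.
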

\begin{proof}
Let $A$ be SPD with diagonal dominance,
\begin{equation}
\left|A_{i,i}\right| \geq \sum_{k\neq i}\left|A_{i,k}\right|, \forall i.
\end{equation}
Symmetry of $\shat{A}$ is guaranteed from the symmetry of both $A$ and the
$\mathcal{N}$ from Algorithm~\ref{alg:sparsify}.
For all off-diagonal entries $(i,j), (j,i) \in \mathcal{N}$,
\begin{equation}
  \shat{A}_{i,j}=A_{i,j}=A_{j,i}=\shat{A}_{j,i},
\end{equation}
by Line~\ref{alg:dropa} in Algorithm~\ref{alg:sparsify}b and the symmetry of $A$.

The positive-definiteness is guaranteed by the diagonal dominance and a
Gershgorin disc argument.  The proof proceeds by starting with the matrix $A$
and then considering the change made to $A$ by the elimination of each entry.
Initially, all the Gershgorin discs of $A$ are
strictly on the right-side of the origin, thus implying that all
eigenvalues are non-negative.  Then, assume that we eliminate some arbitrary
entry $A_{i,j}$, $(i,j) \in \mathcal{N}$.  This results in row $i$ being
updated
\begin{equation}
  A_{i,i} \leftarrow A_{i,i} + A_{i,j} \quad \mbox{and} \quad A_{i,j} \leftarrow 0
\end{equation}
If $A_{i,j} > 0$, then the center of the Gershgorin disc is shifted to the right, and the radius
shrinks, thus keeping the disc to the right of the origin and preserving definiteness.
If $A_{i,j} < 0$, then the center of the disc is
shifted to the left by $|A_{i,j}|$, but the radius of the disc also shrinks by $|A_{i,j}|$.
This also keeps the disc to the right of the origin and preserves semi-definiteness.
Furthermore, since each disc is never shifted to the left, diagonal dominance is also
preserved.
The proof then proceeds by considering all of the entries to be eliminated.
\end{proof}
\begin{remark}
   If any row of $A$ is strictly diagonally dominant, as often happens with Dirichlet boundary
   conditions, then $\shat{A}$ will be positive definite.  Essentially, Algorithm~\ref{alg:sparsify}b
   never shifts a Gershgorin disc to the left, so $\shat{A}$ can have no $0$ eigenvalue.
\end{remark}

\section{Parallel Performance}\label{section:perf}

In this section we model the parallel performance of Galerkin, non-Galerkin,
Sparse and Hybrid Galerkin using full lumping as in
Algorithm~\ref{alg:sparsify}, and Sparse and Hybrid Galerkin with diagonal
lumping as in Algorithm~\ref{alg:sparsify}b (labeled with \textit{Diag})
in order to illustrate the per-level costs associated with each method.  The
large increase in cost on coarse-levels in the Galerkin method (see
Figure~\ref{figure:level_times})
is due to the increase in coarse-level communication.

The solve phase of AMG (see Algorithm~\ref{alg:amg_solve}) is largely comprised
of sparse matrix-vector multiplication, thus we model each method by assessing
the cost of performing a SpMV on each level of the hierarchy.  We focus on the
operators $A_\ell$, as the work required for this matrix is more costly than
the restriction and interpolation operations. Specifically, we employ an
$\alpha$--$\beta$ model to capture the cost of the parallel SpMV based on the
number of nonzeros in $A$.  We denote $p$ as the number of processors,
$\alpha$ as the latency or startup cost of a message, and $\beta$ as the
reciprocal of the network bandwidth~\cite{Modeling, ModelSpMV}.  In addition, $\nnz_p$ represents the
average number of nonzeros local to a process, while $s_{p}$ and $n_{p}$ are
the maximum number of MPI sends and message size across all processors.
Finally, we use $c$ to represent the cost of a single floating-point operation.
With this we model the total time as
\begin{equation}\label{eq:model} T = 2 \, c \, \nnz_{p} + \max_{p} s_{p}(\alpha
  + \beta n_{p}).
\end{equation}
For the model parameters above we us the Blue Waters supercomputer at the
University of Illinois at Urbana-Champaign~\cite{BlueWaters,bw-in-vetter13}.
The latency and
bandwidth were measured through the HPCC benchmark~\cite{HPCC}, yielding
$\alpha= \num{1.8e-6}$ and $\beta = \num{1.8e-9}$. Since the achieved floprate depends
on matrix size, we determine the value of $c$ by timing the local SpMV\@.
Specifically, letting $\nnz_{\text{local}}$ be the number of nonzeros local to the
processor and $T_{\text{local}}$ the time to perform the local portion of the
SpMV, we compute $c = \sfrac{T_{\text{local}}}{2\nnz_{\text{local}}}$ for each
matrix in the hierarchy.

The minimal per-level cost associated with the non-Galerkin and Sparse/Hybrid
Galerkin methods occurs when entries are removed with a drop tolerance of
$\gamma=1.0$.  Using the model, (\ref{eq:model}), this is highlighted in
Figure~\ref{figure:perf_min} for both the Laplace and rotated anisotropic
diffusion problems (a full description of these problems is given in Section~\ref{section:results}.  We see that both non-Galerkin and Hybrid Galerkin have
potential to minimize the per-level cost.  However, when the per-level cost is
minimized, the convergence of AMG often suffers.  Therefore, less-aggressive
drop tolerances such as $\gamma<1.0$ may remove fewer entries, increasing the
per-level cost, but due to better convergence will improve the overall cost of the
solve phase.
\begin{figure}[ht!]
	\centering
  \includegraphics[width=0.49\textwidth]{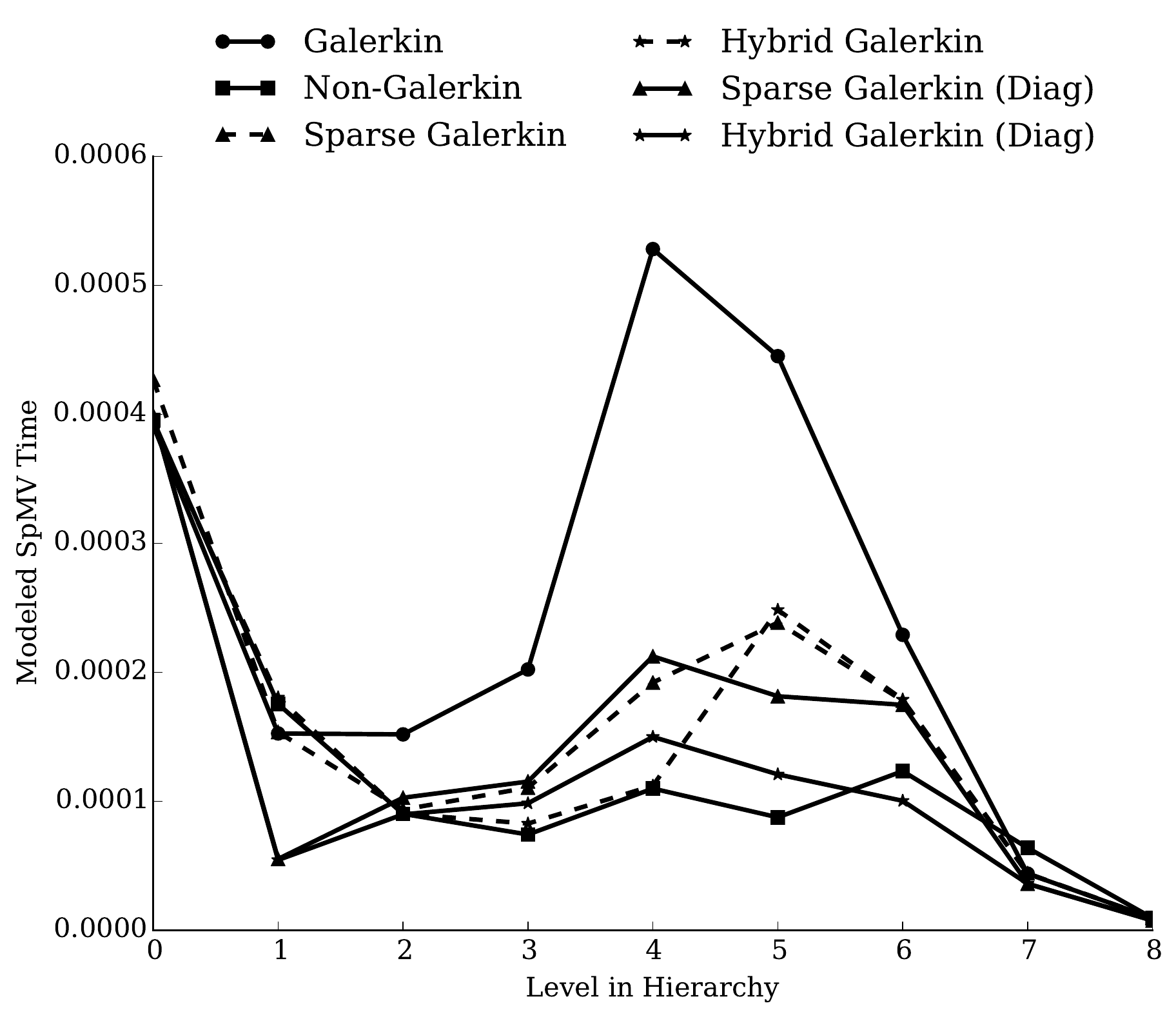}
  \hfill
  \includegraphics[width=0.49\textwidth]{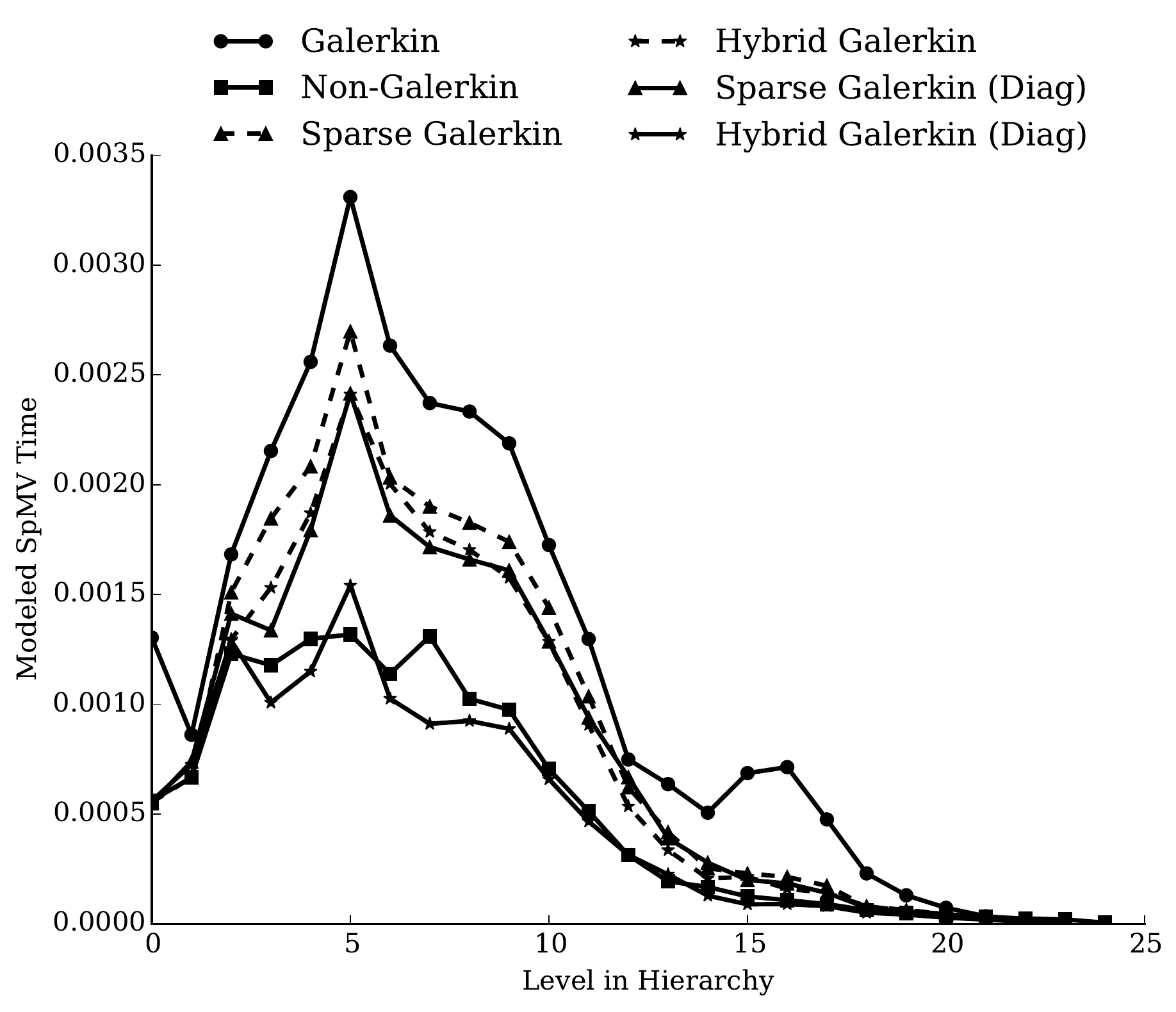}
	\caption{Modeled minimal cost of a single SpMV on each level of the AMG
	hierarchy for Laplace (left) and rotated anisotropic diffusion (right),
	for an aggressive drop tolerance of 1.0 on each level.}\label{figure:perf_min}
\end{figure}

Figure~\ref{figure:perf} models the cost of a SpMV on each level of the
hierarchy in the case of more realistic drop tolerances that are used to retain
convergence of the original method. These drop tolerances vary by level in the
AMG hierarchy, each containing a combination of $0.0$, $0.01$, $0.1$, and $1.0$.
For each test displayed in the model, six drop tolerance series were tested, and
we selected the smallest solve time.  The results underscore the simplicity of
the Laplacian, as removing entries can fortuitously improve convergence of this
problem.  For the rotated anisotropic diffusion problem the per-level cost of
non-Galerkin and Hybrid Galerkin increase in order to retain convergence.
However, the per-level cost of the non-Galerkin and Hybrid/Sparse Galerkin
methods are significantly decreased for levels near the middle of the hierarchy.
\begin{figure}[ht!]
	\centering
  \includegraphics[width=0.49\textwidth]{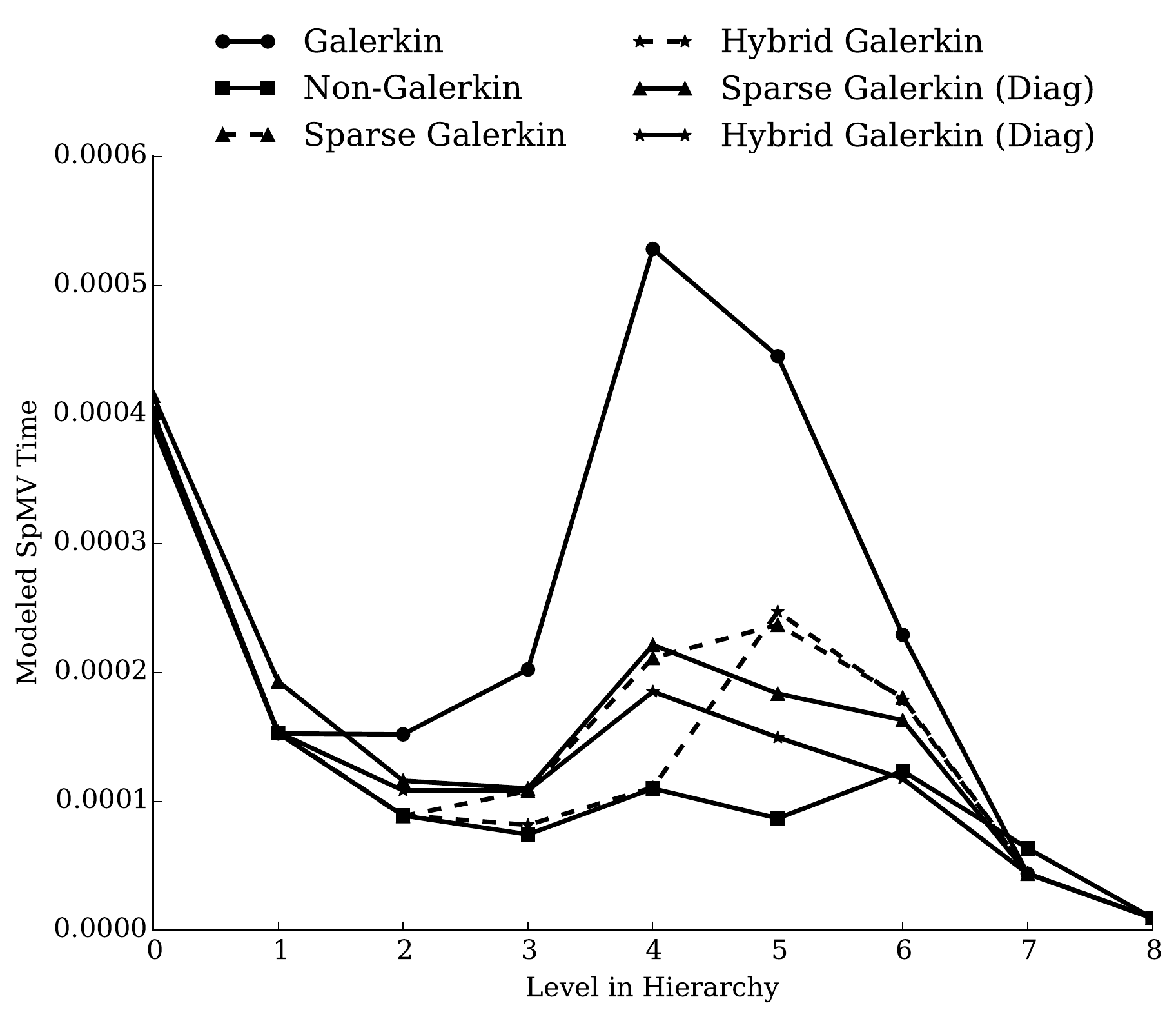}
  \hfill
  \includegraphics[width=0.49\textwidth]{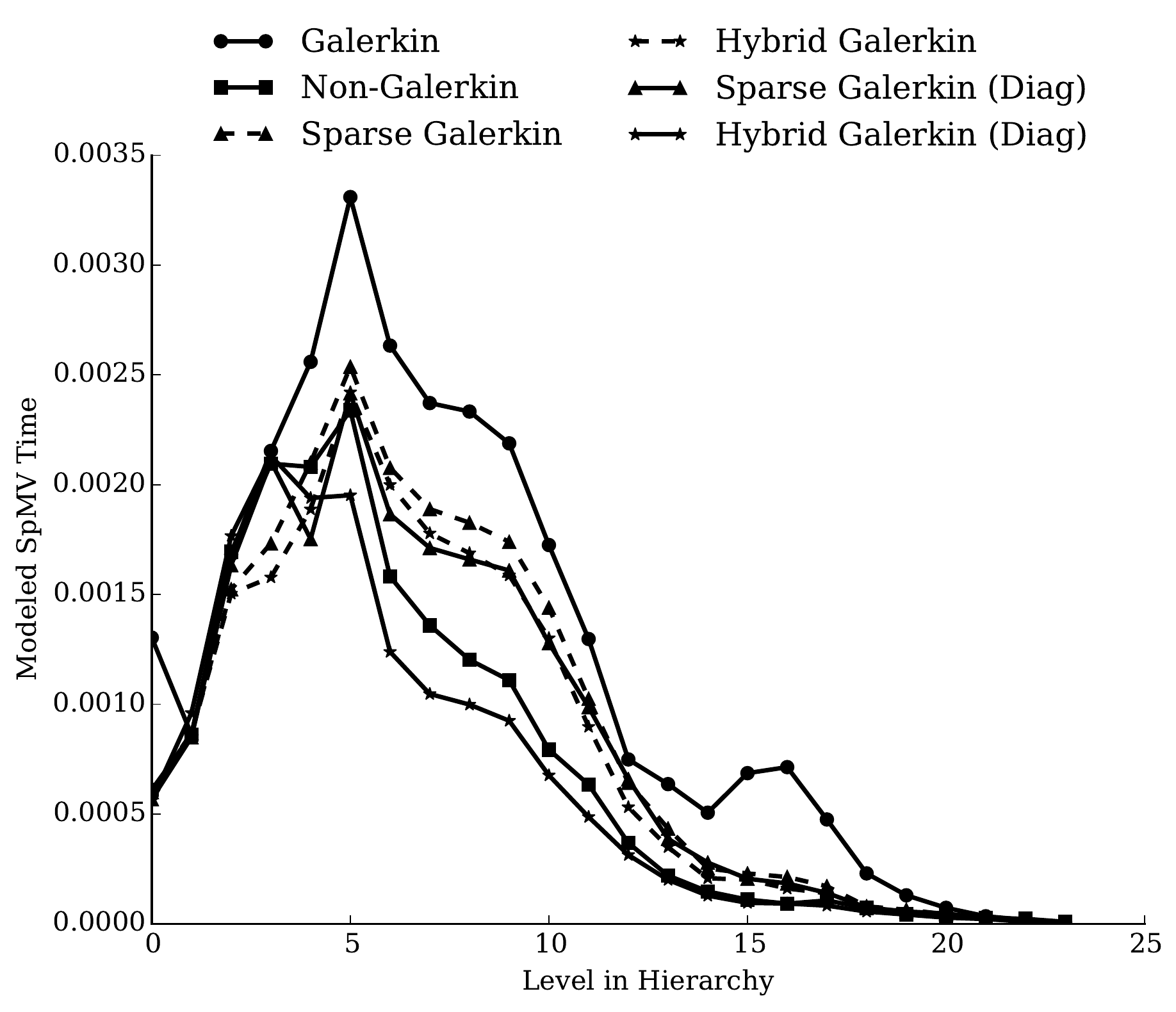}
\caption{Modeled minimal cost of a single SpMV on each level of the AMG
	hierarchy for Laplace (left) and rotated anisotropic diffusion (right)
	when a practical drop tolerance is used.}\label{figure:perf}
\end{figure}

\section{Parallel results for Sparse and Hybrid Galerkin}\label{section:results}

In this section we highlight the parallel performance of the Sparse and Hybrid
Galerkin methods.  We consider scaling tests on the familiar 3D Laplacian since
this is a common multigrid problem used to establish a baseline.  In order to
test problems where AMG convergence is suboptimal, we consider the 2D rotated
anisotropic diffusion problem.  Finally, we test our methods on a suite of
matrices from the Florida Sparse Matrix Collection.  All computations were
performed on the Blue Waters system at the University of Illinois at
Urbana-Champaign~\cite{BlueWaters}.  Each method was implemented and solved
with hypre~\cite{hypre, BoomerAMG}, using default parameters unless otherwise
specified. In summary, we compare the solve and setup times of the four methods
discussed in previous sections, preconditioning a Krylov method such as CG or
GMRES in each test:
\begin{description}
\item[Galerkin:] Classic coarsening in AMG, as outlined
                 in Algorithm~\ref{alg:amg_setup};
\item[non-Galerkin] The base algorithm presented in~\cite{NonGal_Schroder},
                    where $P^T A P$ is not used on coarse-levels;
\item[Sparse Galerkin] A new algorithm presented in
                       Algorithm~\ref{alg:sparse_hybrid}; and
\item[Hybrid Galerkin] A new algorithm presented in
                       Algorithm~\ref{alg:sparse_hybrid}.
\end{description}
In addition, we also consider the Sparse and Hybrid Galerkin methods with
diagonal lumping, as detailed in Algorithm~\ref{alg:sparsify}b.  The drop
tolerances for each method vary by level, using a combination of $0.0$, $0.01$,
$0.1$, and $1.0$ across the coarse-levels.  Six combinations of these drop
tolerances are tested for the various test cases, and the series yielding the
 minimum solve time for each is selected. \textit{note:} At $100,000$ cores, the
best drop tolerances from the second largest run size are used due to large costs
associated with running 6 drop tolerances at this core count.

We consider the diffusion problem
\begin{equation}
-\nabla \cdot K \nabla u = 0,
\end{equation}
with two particular test cases for our simulations:
\begin{description}
\item[3D Laplacian] Here, we use $K=I$ on the unit cube with homogeneous
  Dirichlet boundary conditions. Q1 finite elements are used to discretize the
  problem using a uniform mesh, leading to a familiar 27-point stencil.  The
  preconditioner formed for the 3D Laplacian uses aggressive coarsening (HMIS)
  and distance-two (extended classical modified) interpolation.  The
  interpolation operators were formed with a maximum of five elements per row,
  and hybrid symmetric Gauss-Seidel was the relaxation method.

  \item[Rotated, anisotropic diffusion]  In this case, we consider a diffusion
  tensor with homogeneous Dirichlet boundary conditions of the form $K=Q^{T}DQ$,
  where $Q$ is a rotation matrix and $D$ is a
  diagonal scaling defined as
  \begin{equation}
  Q = \left( \begin{matrix}
              \cos(\theta) & \sin(\theta)\\
              -\sin(\theta) & \cos(\theta)
           \end{matrix} \right)
         \qquad
D = \left (
\begin{matrix} 1 & 0\\ 0 & \epsilon \end{matrix} \right ).
       \end{equation}
        Q1 finite elements are used to discretize a uniform, square mesh. In
        the following tests we use $\theta = \frac{\pi}{8}$ and $\epsilon = 0.001$. In
        each case, the preconditioner uses Falgout
       coarsening~\cite{falgout}, extended classical modified interpolation and
       hybrid symmetric Gauss-Seidel.
\end{description}

Lastly, as problems with less structure result in increased density on coarse-levels, we
consider a subset from the Florida sparse matrix collection.
\begin{description}
       \item[Florida sparse matrix collection subset] We consider all real,
       symmetric, positive definite matrices from the Florida sparse matrix
       collection with size over 1,000,000
 degrees-of-freedom.  In addition we consider only the cases where GMRES
 preconditioned with Galerkin AMG converges in fewer than 100 iterations.
Each problem uses HMIS coarsening and so-called \textit{extended+i} interpolation if possible.  In
some cases, however, Galerkin AMG does not converge with these options; in these
cases Falgout coarsening and modified classical interpolation are used.
Relaxation for all systems is hybrid symmetric Gauss-Seidel.  \textit{note:}
When necessary for convergence, some hypre parameters, such as the minimum
coarse-grid size and strength tolerance, vary from the default.
\end{description}

The following results demonstrate that the diagonally lumped Sparse and Hybrid
Galerkin methods are able to perform comparably to non-Galerkin.  Non-Galerkin
and Sparse/Hybrid Galerkin all significantly reduce the per-iteration cost by
reducing communication on coarse-levels.  Since the method
of non-Galerkin is multiplicative in construction, the setup times are often
much lower in comparison to standard Galerkin.  However, Sparse and Hybrid do
not observe this benefit since the processing is \textit{post facto}. While the
per-iteration work is decreased for all methods, the convergence suffers for
the case of rotated anisotropic diffusion problems with non-Galerkin at large
scales. However, Sparse and Hybrid Galerkin converge at rates similar to the
original Galerkin hierarchy, yielding speedup in total solve times.

A strong scaling study shows that the anisotropic problem of a set size can be
most efficiently solved at larger scales when using non-Galerkin, but Hybrid
Galerkin performs comparably. Lastly, a strong scaling study of the subset of
Florida sparse matrix collection problems shows that non-Galerkin and Sparse/Hybrid
Galerkin each improve solve phase times for all matrices. While non-Galerkin
performs slightly better for many problems on $32$ cores, Hybrid Galerkin
outperforms the other methods for most problems at larger scales.

\subsection{Increasing sparsity in AMG Hierarchies}\label{section:results_hierarchy}

The significant number of nonzeros on coarse-levels creates large, relatively dense matrices near
the middle of the AMG hierarchy, yielding large
communication costs for each SpMV performed on these levels. As the solve
phase of AMG consists of many SpMVs on each level of the hierarchy, the time
spent on coarse-levels can increase dramatically.  Sparse, Hybrid, and non-Galerkin can all reduce both the cost associated with communication as well as the time spent on each level during a solve phase.

Figure~\ref{figure:level_times_512_large} shows the time spent on each level of
the hierarchy during a single iteration of AMG, for both test cases with 10,000
degrees-of-freedom per core using 8192 cores.  Both the method of non-Galerkin
coarse-grids, as well as the Sparse and Hybrid Galerkin methods, reduce
the time required on levels near the middle of the hierarchy.  Non-Galerkin more greatly reduces the time spent on middle levels of the hierarchy for the Laplace problem than Sparse and Hybrid Galerkin.  However, for the anisotropic problem, diagonally-lumped Hybrid Galerkin reduces level-time equivalently.  This is due to a
large reduction in the number of messages required in each SpMV as shown in
Figure~\ref{figure:comm_512_large}. The reduction in total size of all
messages communicated is relatively small.
\begin{figure}[ht!]
	\centering
\includegraphics[width=0.49\textwidth]{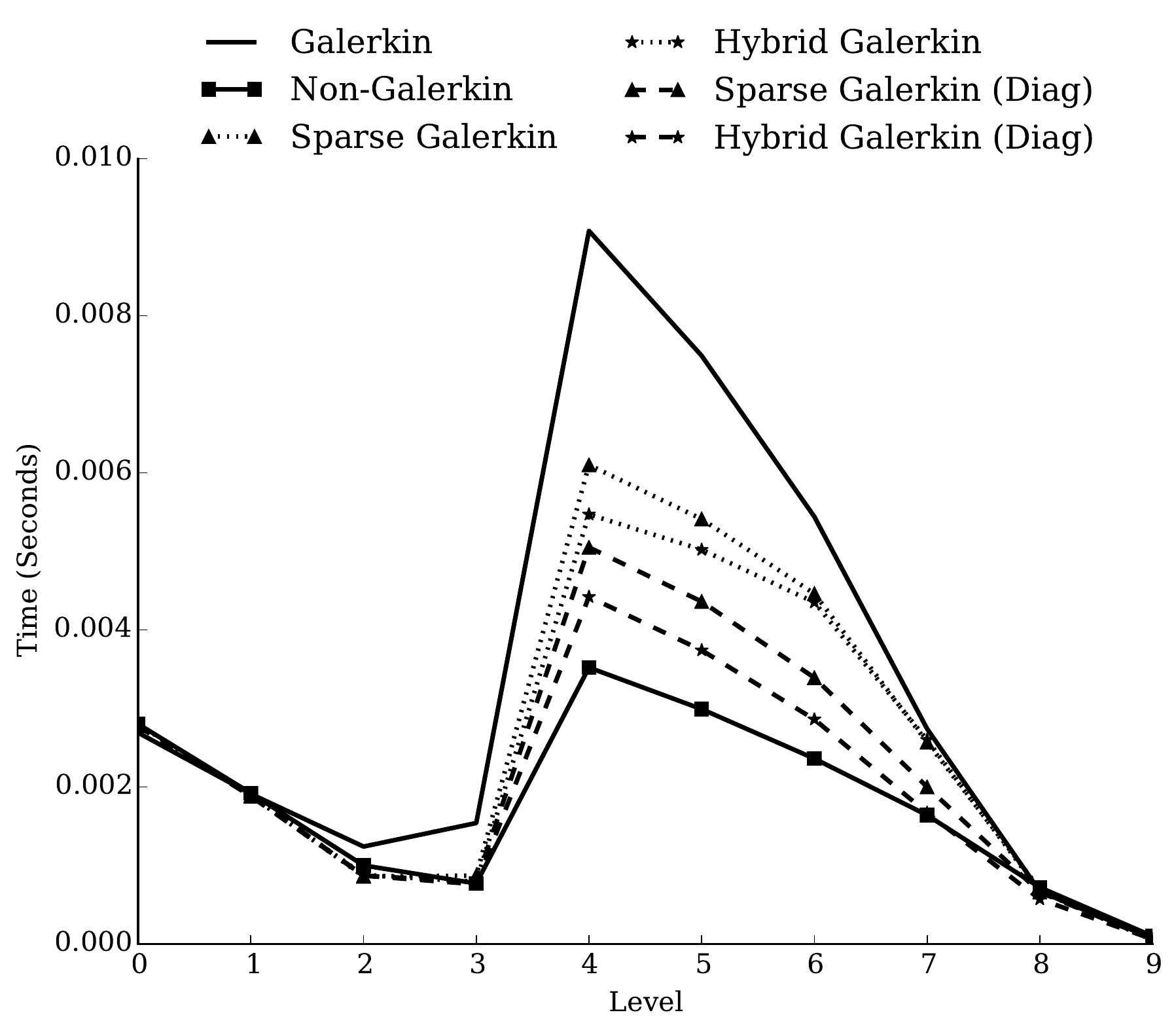}
\hfill
\includegraphics[width=0.49\textwidth]{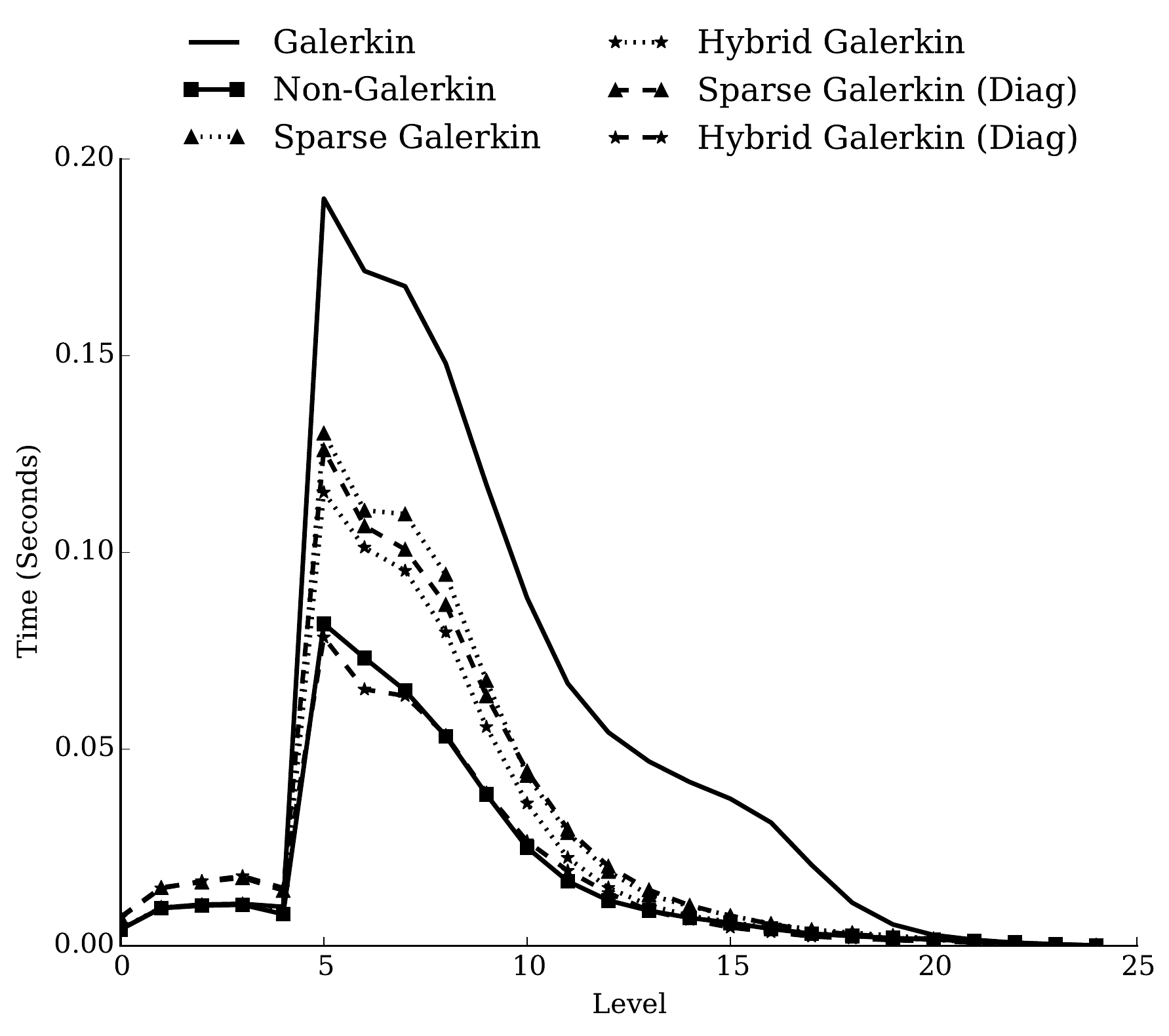}
	\caption{Time spent on each level of the AMG hierarchy during a single
	iteration of the solve phase for {\bf Laplace (left)} and {\bf rotated
	anisotropic diffusion (right)}, each with 10,000 degrees-of-freedom per
	core.}\label{figure:level_times_512_large}
\end{figure}
\begin{figure}[ht!]
	\centering
\includegraphics[width=0.49\textwidth]{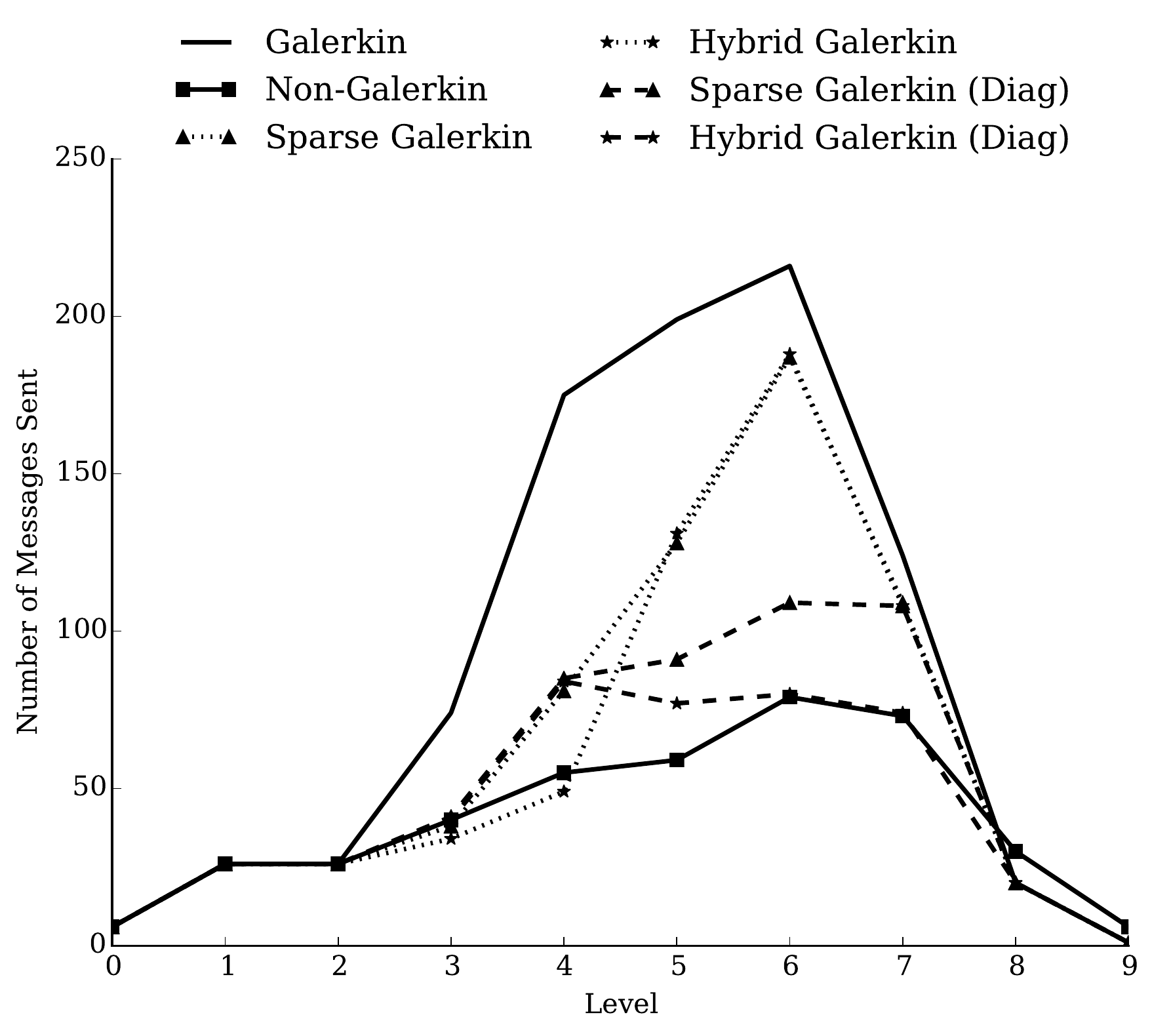}
\hfill
\includegraphics[width=0.49\textwidth]{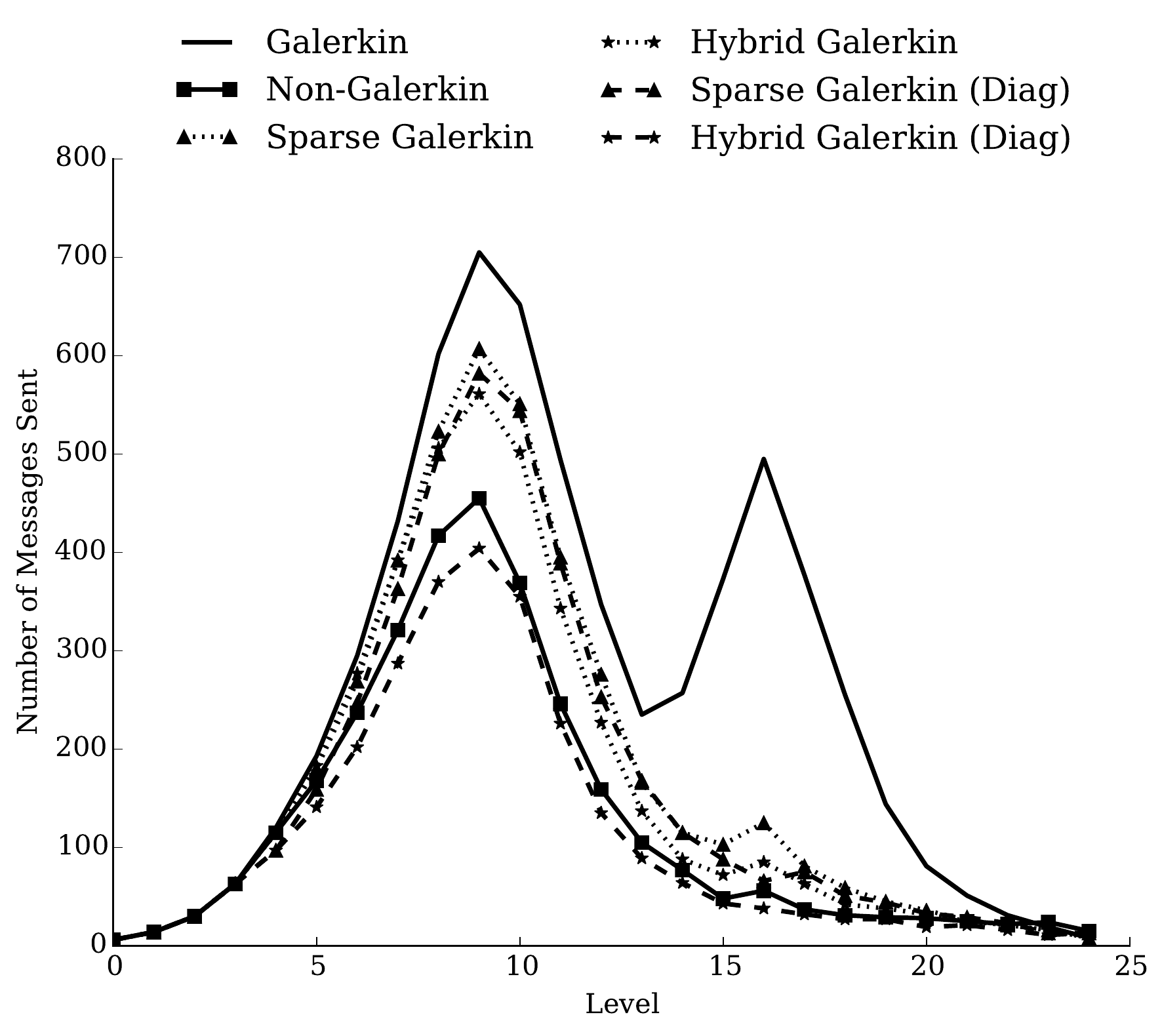}
\caption{Number of sends required to perform a single SpMV on each level of the
AMG hierarchy for: {\bf Laplace (left)} and {\bf Rotated anisotropic diffusion
(right)}, each with 10,000 degrees-of-freedom per core.}\label{figure:comm_512_large}
\end{figure}

The increase in time spent on each level, as well as the associated
communication costs of these levels, becomes more pronounced at higher processor
counts in a strong scaling study. Figure~\ref{figure:performance_512_small}
illustrates this by plotting the per-level times required
during a single iteration of AMG, as well as the number of messages communicated
during a SpMV for the rotated anisotropic diffusion problem with 1,250
degrees-of-freedom per core using 8192 cores.  Compared with the 10,000 degrees-of-freedom
per core example in Figures~\ref{figure:comm_512_large}~and~\ref{figure:level_times_512_large},
there is a sharper increase in time
required for levels near the middle of the hierarchy due to the increasing dominance of
communication complexity.
\textit{note:} Strong scaling the Laplace problem results in similar
performance.

\begin{figure}[ht!]
	\centering
\includegraphics[width=0.49\textwidth]{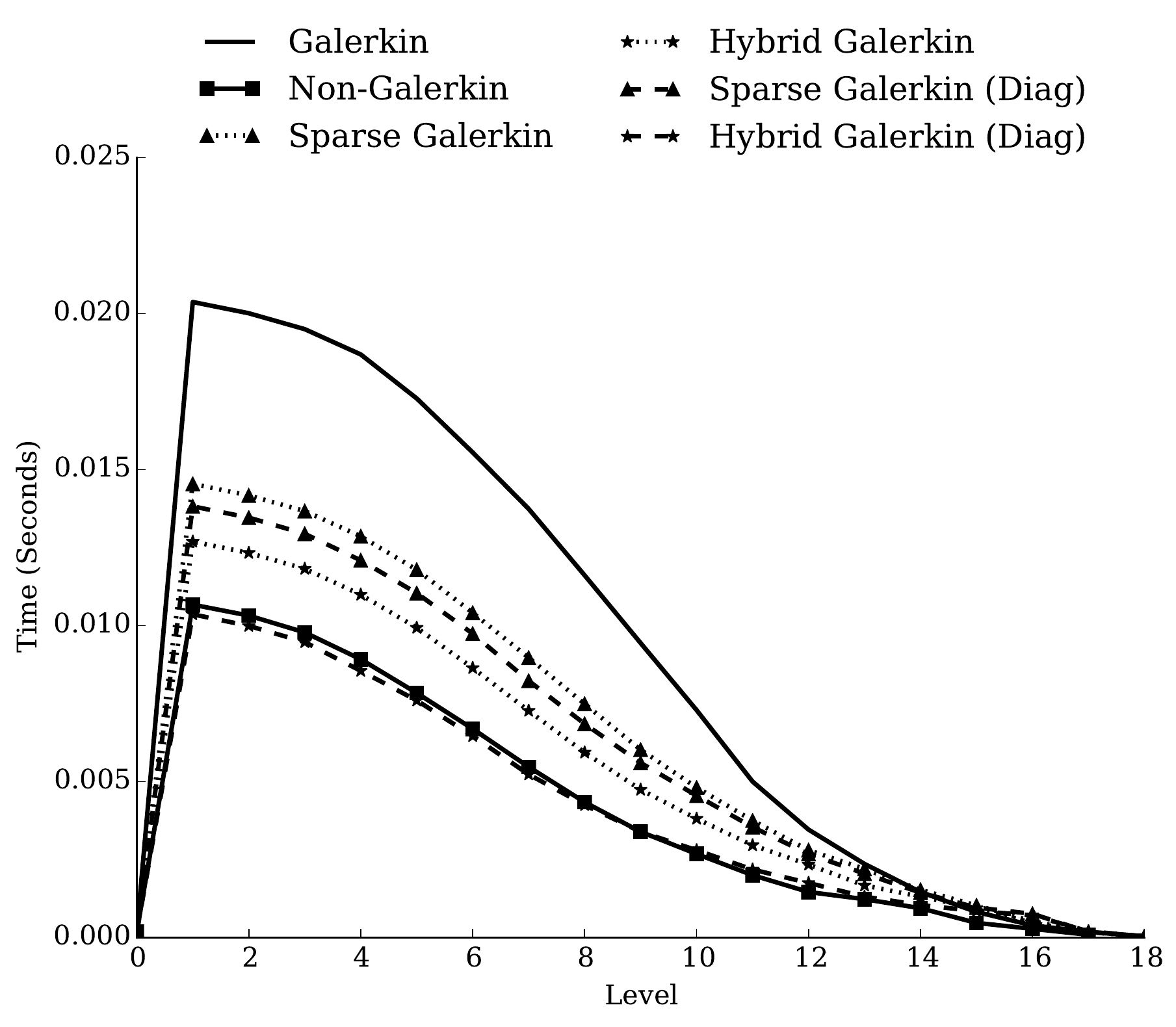}
\hfill
\includegraphics[width=0.49\textwidth]{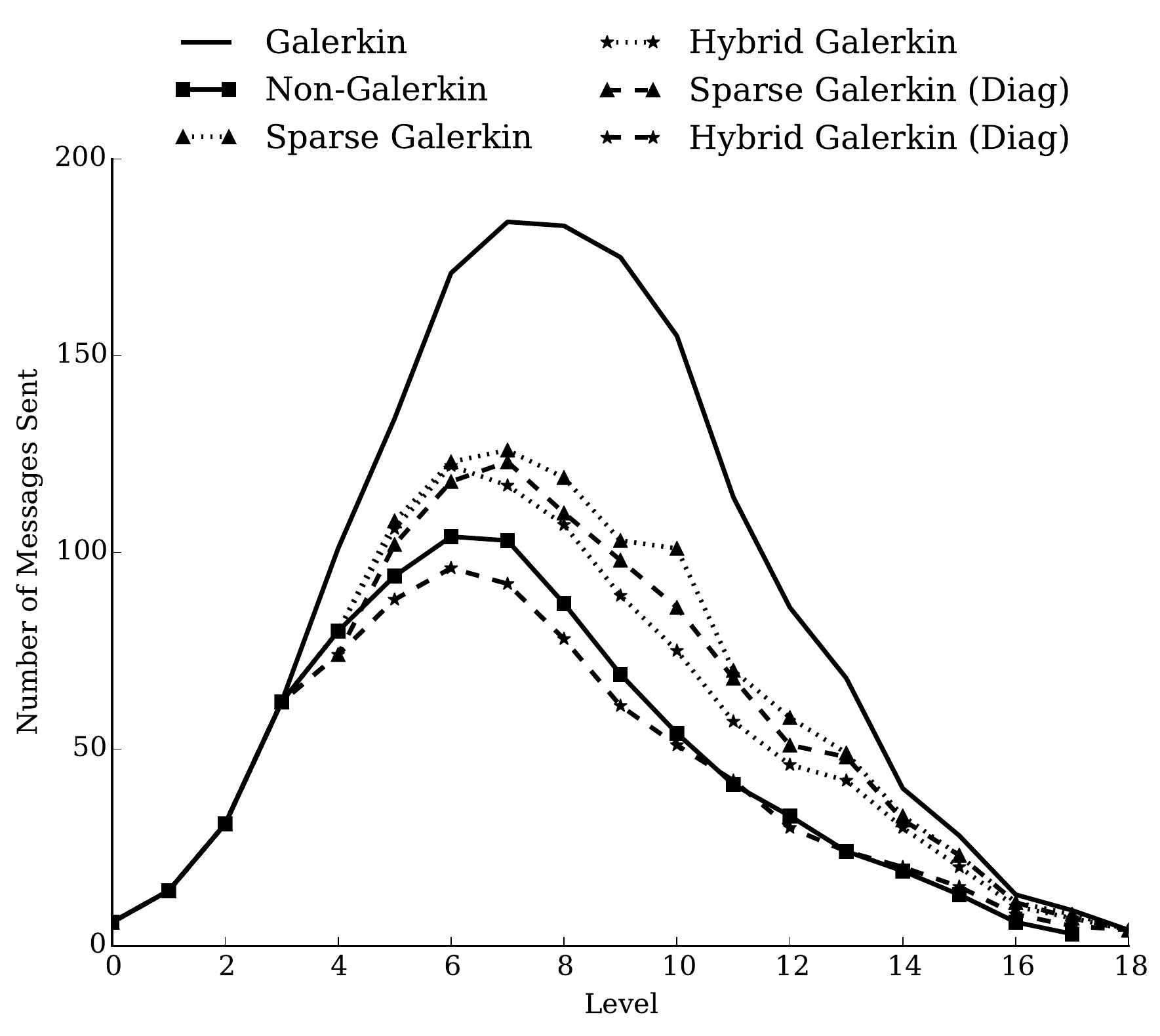}
	\caption{For each level of the AMG hierarchy, time per
	iteration of AMG (left) and number of messages sent during a
	single SpMV (right) for the {\bf rotated anisotropic diffusion} problem
	with 1,250 degrees-of-freedom per core.}\label{figure:performance_512_small}
\end{figure}

\subsection{Costs of Weakly Scaled Setup Phases}\label{section:results_setup}

Each sparsification method can lead to reduced communication costs in the
middle of the hierarchy. However, removing insignificant entries from
coarse-grid operators requires additional work in the setup phase. In the
non-Galerkin method, setup times are reduced since the increased sparsity
is used directly in the triple-matrix product required to form each
successive coarse-grid operator. However, for the new methods, Sparse and Hybrid Galerkin,
the entire Galerkin hierarchy is first constructed so that the sparsify process on
each level requires additional work.  Figure~\ref{figure:setup} shows the times
required to setup an AMG hierarchy for rotated anisotropic diffusion, with
Laplace setup times scaling in a similar manner.  While there is a slight
increase in setup cost associated with the Sparse and Hybrid Galerkin
hierarchies, this extra work is nominal.  Therefore, while the majority of this additional
 work is removed when using diagonal lumping, the differences in work required in the setup phase between these two lumping strategies is insignificant for the problems being tested.
\begin{figure}[ht!]
	\centering
\includegraphics[width=0.49\textwidth]{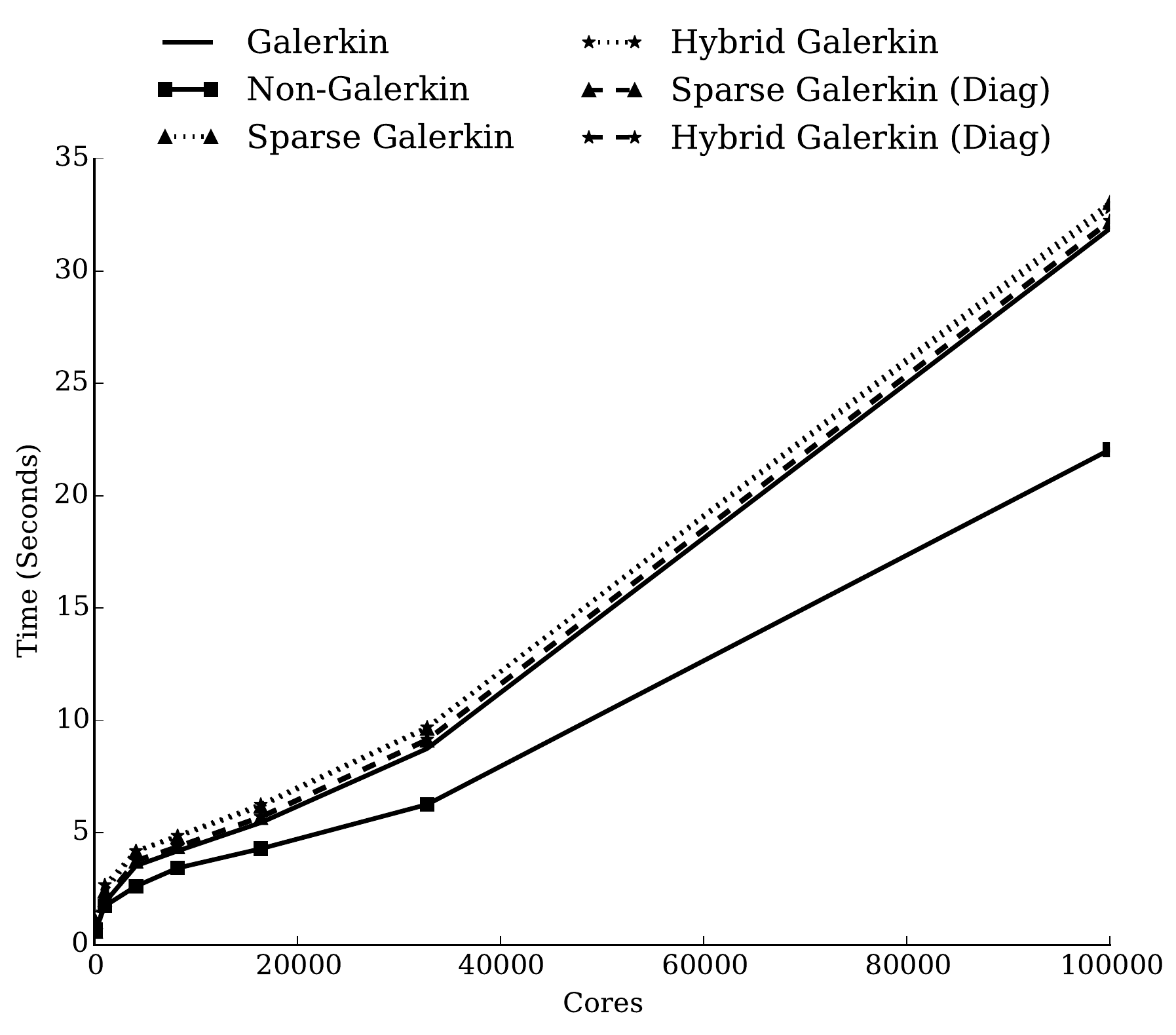}
	\caption{Time required to setup AMG hierarchy for {\bf rotated
	anisotropic diffusion} with 10,000 degrees-of-freedom per
	core.}\label{figure:setup}
\end{figure}

\subsection{Weak Scaling of GMRES Preconditioned by AMG}\label{section:results_gmres_weak}

In this section we investigate the \textit{weak} scaling properties of the
methods.  Figure~\ref{figure:aniso_times_scale_weak} shows both the average
convergence factor and total time spent in the solve phase for a weak scaling
study with rotated anisotropic diffusion problems at 10,000 degrees-of-freedom
per core using GMRES preconditioned by AMG\@. GMRES is used over CG because 
Algorithm~\ref{alg:sparsify} guarantees symmetry but not positive-definiteness
of the preconditioner.
In many cases, positive-definiteness is preserved, but when using more aggressive
drop tolerances, we have observed this property being lost.
While the convergence of both diagonally-lumped Sparse and Hybrid Galerkin remain
similar to that of Galerkin, the non-Galerkin method converges more slowly.
Therefore, while non-Galerkin and diagonally-lumped Hybrid Galerkin yield
similar communication requirements, GMRES preconditioned by Hybrid Galerkin
performs significantly better as fewer iterations are required.
\begin{figure}[ht!]
	\centering
	\includegraphics[width=0.49\textwidth]{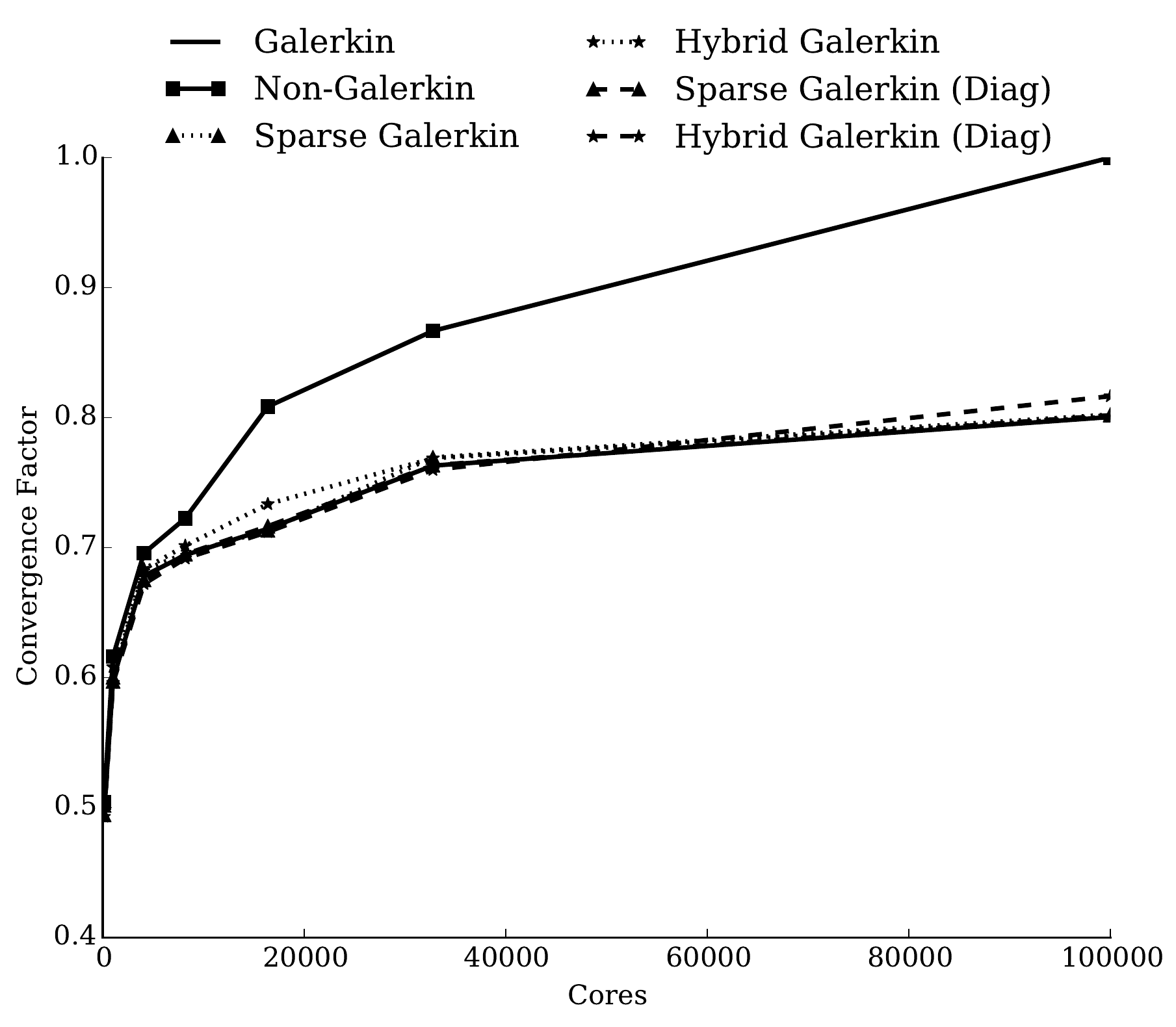}
  \hfill
\includegraphics[width=0.49\textwidth]{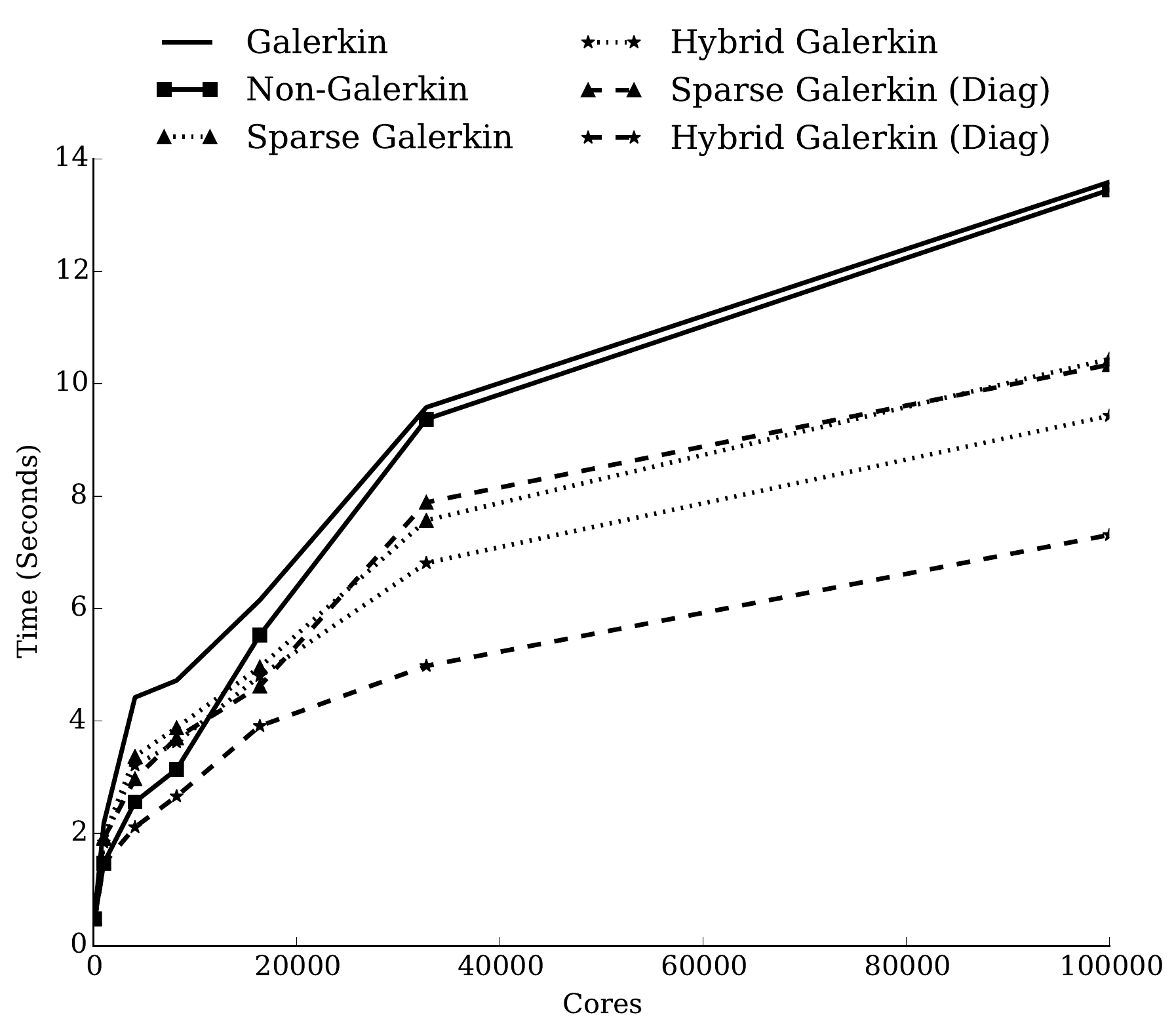}
	\caption{Convergence factors (left) and times (right)
	for weak scaling of {\bf anisotropic} problem (10,000 degrees-of-freedom
	per core), solved by preconditioned GMRES\@.
	For large problem sizes, non-Galerkin AMG does not converge, and
	timings indicate when the maximum iteration count was reached.}\label{figure:aniso_times_scale_weak}
\end{figure}
\begin{remark}
With the chosen drop tolerances, non-Galerkin does not converge for this
 anisotropic problem at $100,000$ cores.  In this case, nothing was dropped from
 the first three coarse-levels of the hierarchy.  On the fourth coarse-level a drop
tolerance of $0.01$ was used, and the fifth was sparsified with a tolerance of
$0.1$.  The remaining levels were sparsified with a drop tolerance of $1.0$.
This was determined to be the best tested drop tolerance sequence for smaller
run sizes, and multiple drop tolerance sequences were not tuned at this large
problem size due to the significant costs.  However, a better drop tolerance
 could yield a convergent non-Galerkin method at this scale.
\end{remark}

The efficiency of weakly scaling to $p$ processes is defined as $E_{p}
= \frac{T_{p}}{pT_{1}}$, where $T_{1}$ is the time required to solve the
problem on a single process and $T_{p}$ is time to solve on $p$ processes. The
efficiency of solving weakly scaled rotated anisotropic diffusion problems with
non-Galerkin, Sparse Galerkin, and Hybrid Galerkin, relative to the efficiency
of Galerkin AMG, are shown in Figure~\ref{figure:aniso_eff_scale_weak}. While
both the original and diagonally-lumped Sparse and Hybrid Galerkin methods
scale more efficiently than Galerkin, the poor convergence of non-Galerkin on large
run sizes yields a reduction in relative efficiency. While the methods perform
similarly when solving the Laplace problem, non-Galerkin improves relative
efficiency for all scalings of this model problem.
\begin{figure}[ht!]
	\centering
	\includegraphics[width=0.49\textwidth]{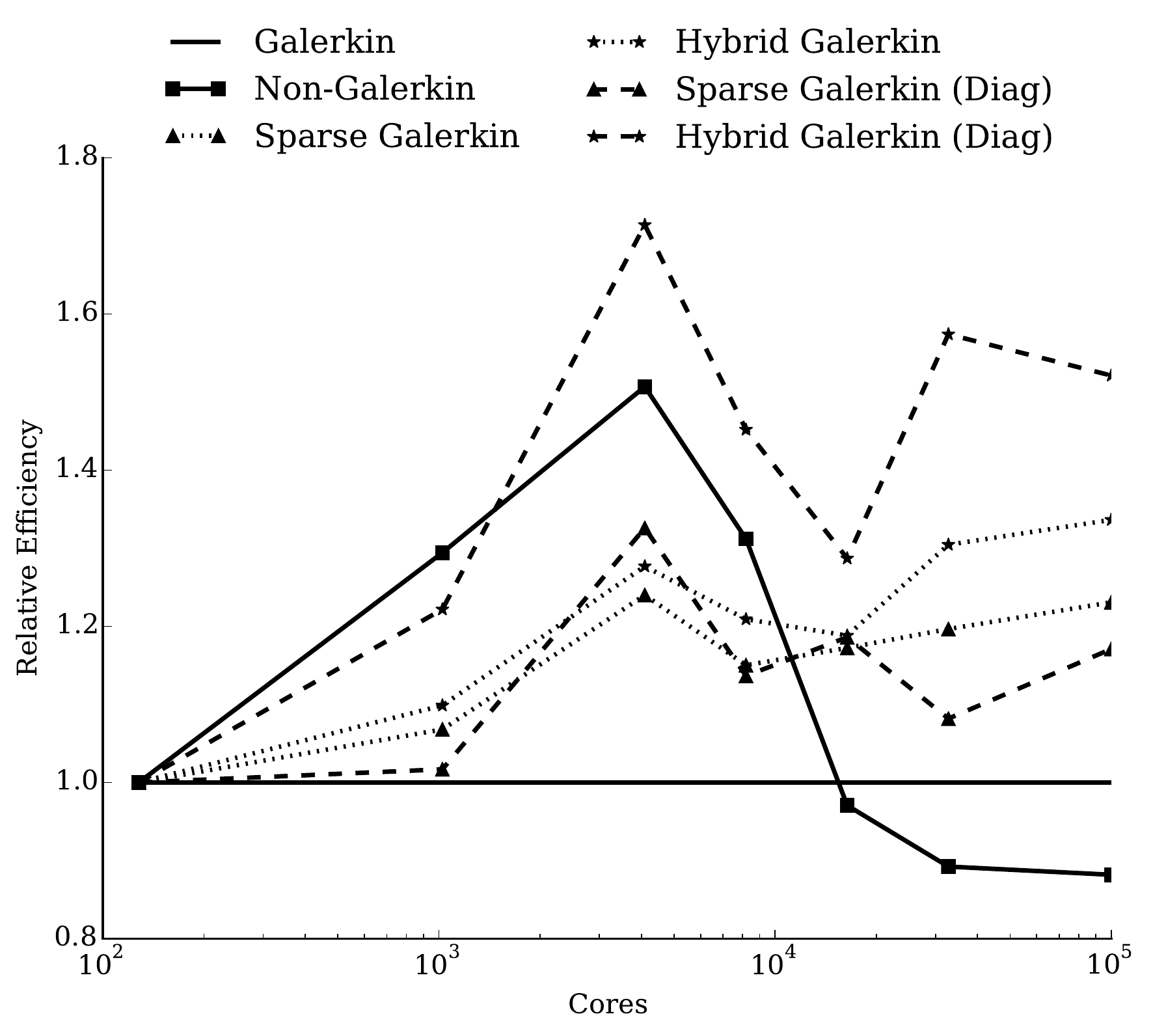}
	\caption{Efficiency of solving weakly scaled {\bf rotated anisotropic
	diffusion} at 10,000 degrees-of-freedom per core with various methods, relative
	to that of the Galerkin hierarchy.}\label{figure:aniso_eff_scale_weak}
\end{figure}

\subsection{Strong Scaling of GMRES Preconditioned by
AMG}\label{section:results_gmres_strong}

We next consider the rotated anisotropic diffusion system with approximately
$10,240,000$ unknowns using cores ranging from $128$ to $100,000$. Therefore,
the simulation is reduced from $80,000$ degrees-of-freedom per core when run on
$128$ cores, to just over $100$ degrees-of-freedom per core on $100,000$ cores.
Computation dominates the total cost of solving a problem partitioned over
relatively few processes, as each process has a large amount of local work.
However, as the problem is distributed across an increasing number of processes,
the local work decreases while communication requirements increase.  Therefore,
the time required to solve a problem is reduced with strong scaling, but only to
the point where communication complexity begins to dominate.  The efficiency of solving
this problem with preconditioned GMRES relative to Galerkin is shown in
Figure~\ref{figure:eff_scale_strong}.  In each case we observe improvements over
standard Galerkin.
\begin{figure}[ht!]
	\centering
	\includegraphics[width=0.49\textwidth]{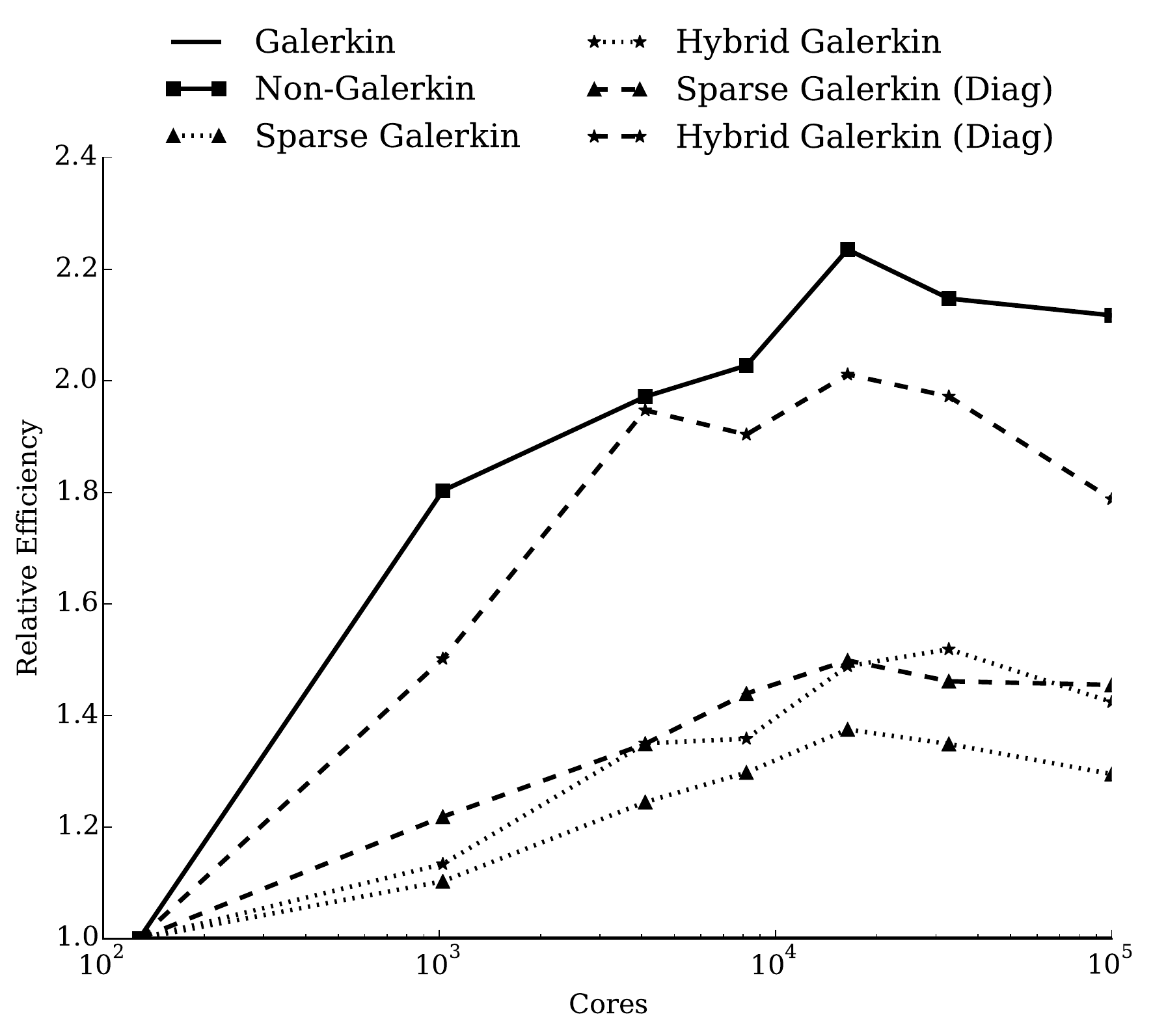}
	\caption{Efficiency of non-Galerkin and Sparse/Hybrid Galerkin methods in a
	strong scaling study, relative to Galerkin AMG for
	 {\bf rotated anisotropic diffusion}.}\label{figure:eff_scale_strong}
\end{figure}

A strong scaling study is also performed on the subset of matrices from the
Florida sparse matrix collection.  These problems were tested on $64$, $128$,
$256$, and $512$ processes.
Figure~\ref{figure:ufl_cycle_times} shows the time required to perform a single
V-cycle for each of the matrices in the subset, relative to the time required
by Galerkin AMG\@. All methods reduce the per-iteration times for each matrix
in the
subset. Furthermore, the total time required to solve each of these matrices
is also reduced, as shown in Figure~\ref{figure:ufl_times}. While Sparse
Galerkin provides some improvement, the Hybrid and non-Galerkin methods
are comparable, particularly at high core counts.
\begin{figure}[ht!]
	\centering
    \includegraphics[width=\textwidth]{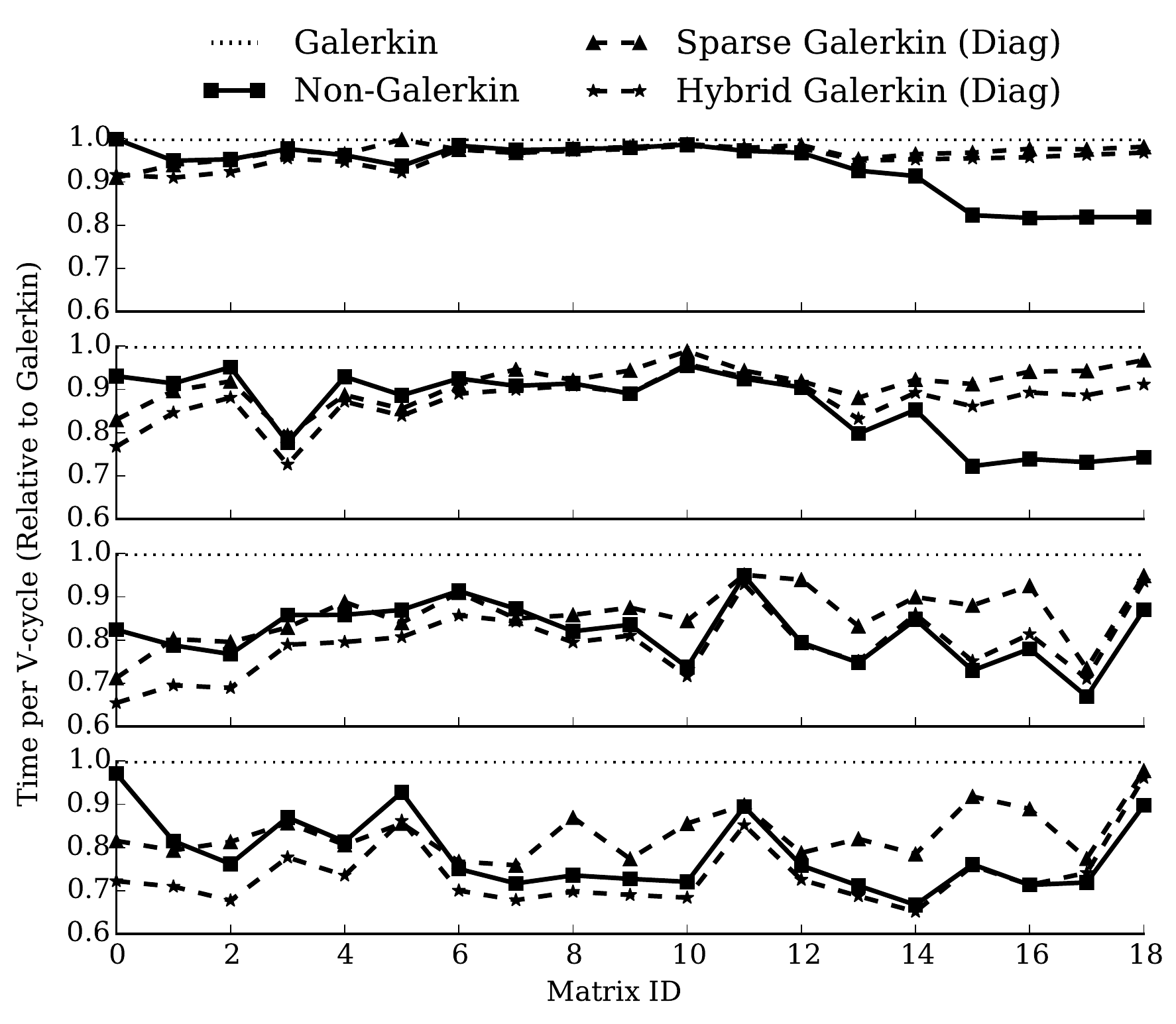}
\caption{Time (relative to Galerkin) per iteration for each matrix in the Florida Sparse Matrix Collection, using $p=64$, $128$, $256$, and $512$.}\label{figure:ufl_cycle_times}
\end{figure}
\begin{figure}[ht!]
	\centering
    \includegraphics[width=\textwidth]{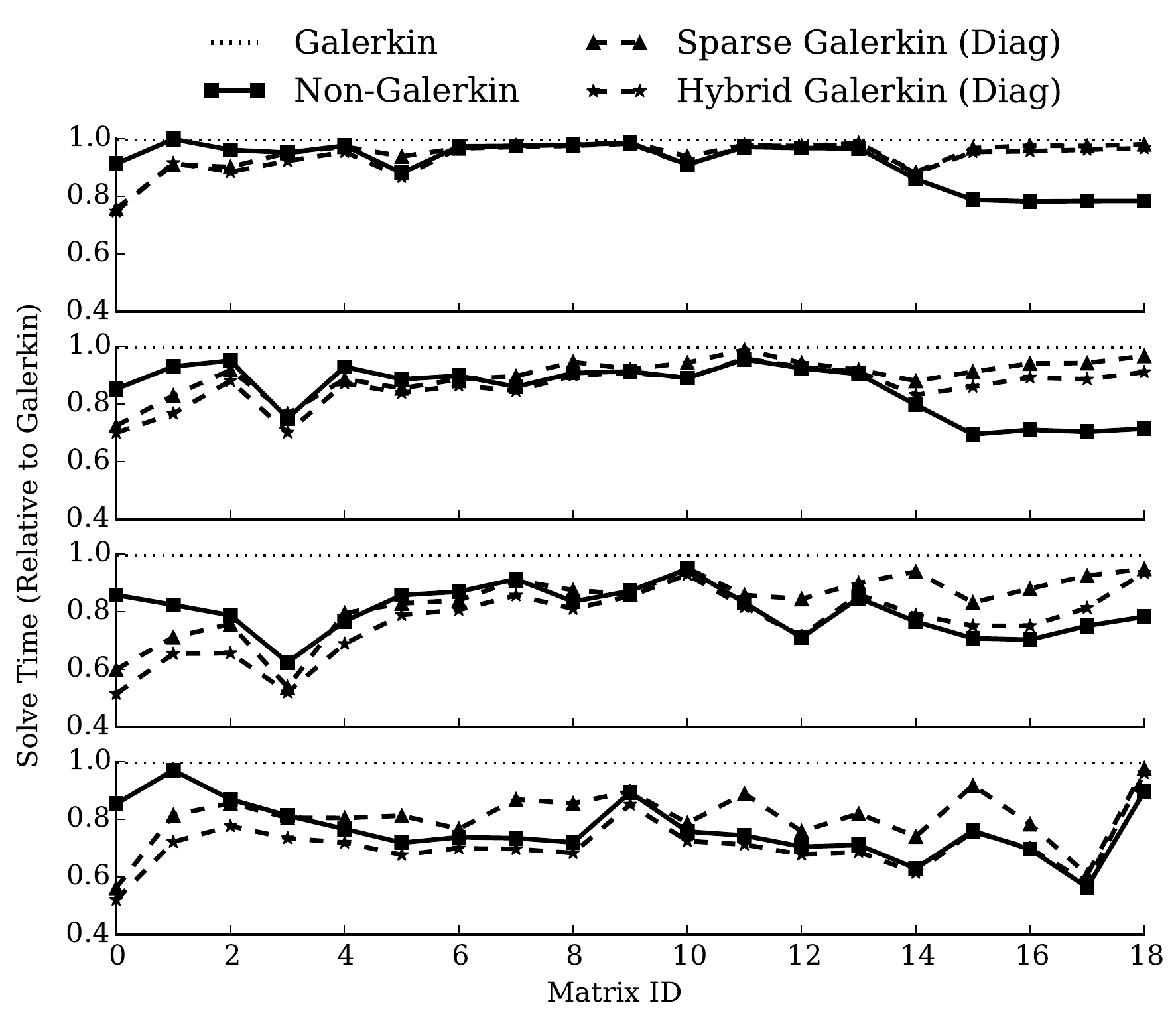}
  \caption{Time (relative to Galerkin) per AMG solve for each matrix in the Florida Sparse Matrix Collection, using $p=64$, $128$, $256$, and $512$.}\label{figure:ufl_times}
\end{figure}

\subsection{Diagonal Lumping Alternative and Preconditioned Conjugate Gradient}\label{section:results_pcg}

Diagonal lumping retains positive-definiteness of diagonally-dominant coarse-grid operators,
as described in Theorem~\ref{thm:spd}. Therefore, as the preconditioned
Conjugate Gradient (PCG) method requires both the matrix and preconditioner to
be symmetric and positive-definite, the Laplace and anisotropic diffusion
problems are solved by Conjugate Gradient preconditioned by the
diagonally-lumped Sparse and Hybrid Galerkin hierarchies.
Figure~\ref{figure:aniso_times_pcg} shows the solve phase times for solving the
weakly scaled rotated anisotropic diffusion problem with PCG\@.  As with GMRES,
both the Sparse and Hybrid Galerkin preconditioners decrease the time required
in the AMG solve phase during a weak scaling study.
\begin{figure}[ht!]
	\centering
	\includegraphics[width=0.49\textwidth]{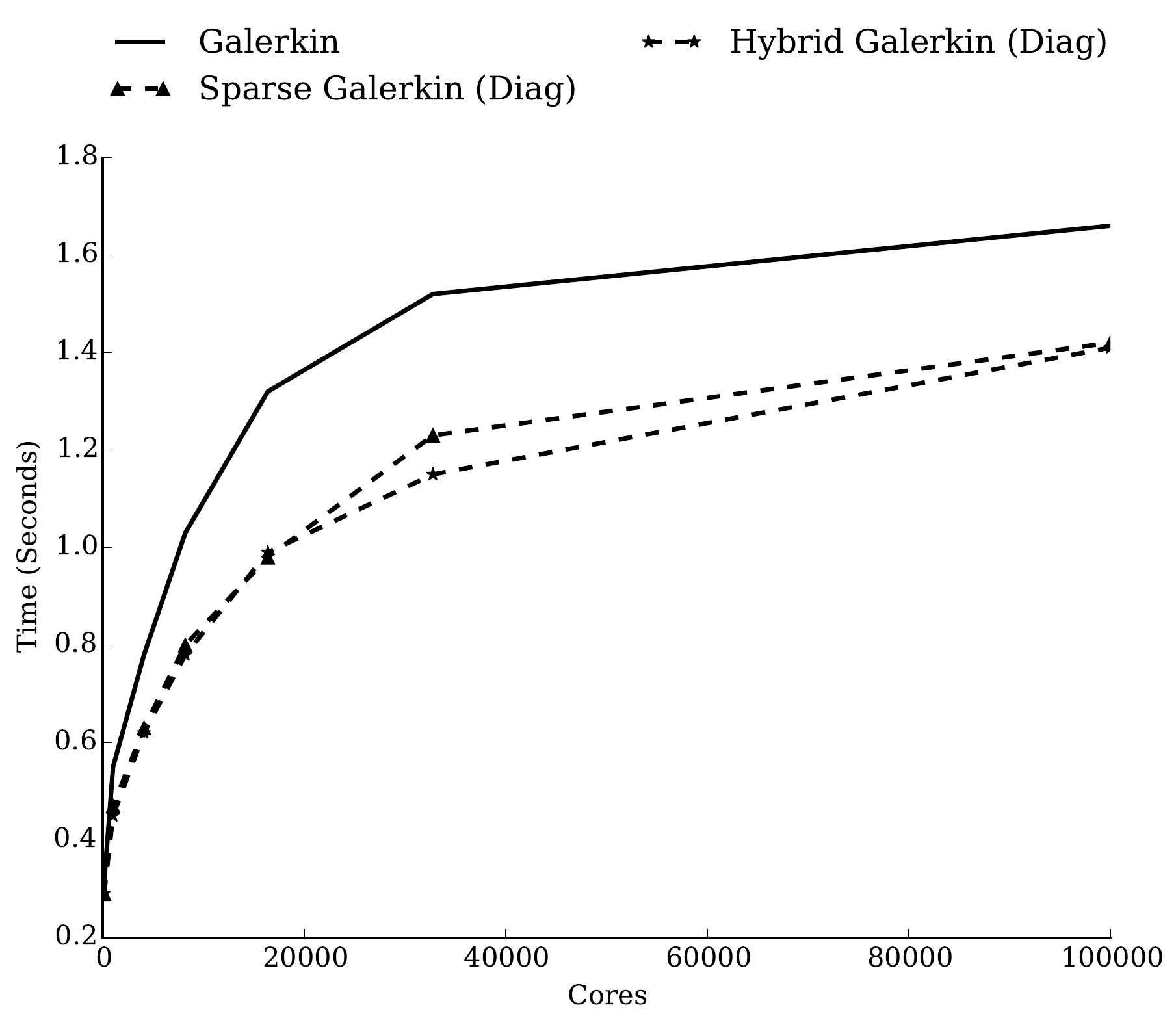}
	\caption{Weak scaling solve time for the {\bf anisotropic} problem, solved by PCG preconditioned by various AMG
	hierarchies.}\label{figure:aniso_times_pcg}
\end{figure}

\section{Adaptive Solve Phase}\label{section:adaptive}

The previous results describe the case of an optimal drop tolerance selected in
\code{sparsify}: the new diagonally-lumped Sparse and Hybrid Galerkin methods
reduce the cost of the solve phase.  However, as the optimal drop tolerance
changes with problem type, problem size, and even level of the AMG hierarchy, an
optimal drop tolerance is often not easily realized.  When the drop tolerance is
too small, few entries are removed from the hierarchy and the
communication complexity remains the same.  However, if the drop tolerance is too large, the
solver is non-convergent, as described in Section~\ref{section:nongal}.

In this section we consider an \textit{adaptive} method that attempts to add
entries back into the hierarchy as a deterioration in convergence is observed.
This is detailed in Algorithm~\ref{alg:adaptive}.  The algorithm initializes a
Sparse or Hybrid Galerkin hierarchy and proceeds by executing $k$ iterations of
a preconditioned Krylov method~---~e.g. PCG\@.  If the convergence is below a
tolerance, the coarse levels are traversed until a coarse grid operator is found
on which entries were removed with a drop tolerance greater than $0.0$.  Entries
are then added back to this coarse-grid operator, reducing the drop tolerance by
a factor of 10.  Any new drop tolerance below $\gamma_{\text{min}}=0.01$ is rounded down to $0.0$.
This continues until entries have been reintroduced into $s$ coarse-grid
operators.  At this point, the Krylov method continues, using the most recent
values for $x$ unless the previous iterations diverged from the true solution.
Many methods such as PCG and GMRES must be restarted after the preconditioner
has been edited.  This entire process is then repeated until convergence.  The
adaptive solve phase requires additional iterations over Galerkin AMG, as
initial iterations of this method may not converge. However, the goal of this
solver is to guarantee convergence similar to Galerkin AMG\@. Speed-up over
Galerkin AMG is still dependent on choosing reasonable initial drop tolerances.
\begin{algorithm}[!ht]
  \DontPrintSemicolon\KwIn{  \begin{tabular}[t]{l l}
    $A$, $b$, $x_{0}$\\
    $\shat{A}_{1}, \dots, \shat{A}_{\ell_{\max}}$ & Sparse/Hybrid Galerkin coarse grid matrices\\
    $A_{1}, \dots, A_{\ell_{\max}}$ & original Galerkin coarse grid matrices\\
    $P_{0}, \dots, P_{\ell_{\max}-1}$\\
    $k$ &  number of PCG iterations before convergence test\\
    $s$ & number of AMG levels to update at a time\\
    $\gamma_{0}$, $\gamma_{1}$, $\dots$ & drop tolerance used at each level for sparsification\\
    \code{tol} & convergence tolerance\\
	\code{sparse\_galerkin} & Sparse Galerkin method\\
	\code{hybrid\_galerkin} & Hybrid Galerkin method\\
  \end{tabular}
  }
	\KwOut{$x$}
  \;
	$x = x_{0}$\;
	$r_{0} = b - Ax_{0}$\;
  \;
	\While{$\sfrac{\|r\|}{\|r_{0}\|} \leq \code{tol}$}{    $M = \code{preconditioner}(\code{amg\_solve}, \shat{A}_{1}, \dots, \shat{A}_{\ell_{\max}}, P_{0}, \dots, P_{\ell_{\max}-1})$\;
		$x = \text{PCG}(A, b, x, k, M)$\tcc*[r]{Call $k$ steps of preconditioned CG}
		$r = b - Ax$\\
		\If{$\frac{\|r\|}{\|r_{0}\|} \leq \code{tol}$}
		{			\code{continue}\\
		}
		\Else		{      \For{$\ell=0,\dots,\ell_{\text{max}}$}{        \If{$\gamma_{0} > 0$}{          $\ell_{\text{start}} \leftarrow \ell$\tcc*[r]{Find finest level that uses dropping}
        }
      }
			\For{$\ell = \ell_{\text{start}} \ldots \ell_{\text{start}} + s$} {				$\gamma_{\ell} =
				\begin{cases}
                \frac{\gamma_{\ell}}{10}, & \text{if}\ \frac{\gamma_{\ell}}{10} > \gamma_{\text{min}} \\
      					0,                        & \text{\code{otherwise}}
        \end{cases}$\tcc*[r]{Determine new dropping parameter}
        \If(\tcc*[f]{Re-add entries at the new dropping tolerance}){\code{sparse\_galerkin}}{			$\shat{A}_{\ell} = $\code{sparsify}$(A_{\ell}$, $A_{\ell-1}$, $P_{\ell - 1}$, $S_{\ell-1}$, $\gamma_{\ell})$
		}
		\ElseIf(\tcc*[f]{Re-add entries at the new dropping tolerance}){\code{hybrid\_galerkin}}{			$\shat{A}_{\ell} = $\code{sparsify}$(A_{\ell}$, $\hat{A}_{\ell-1}$, $P_{\ell - 1}$, $S_{\ell-1}$, $\gamma_{\ell})$

		}
			}
		}
	}
	\caption{\code{adaptive\_solve}}\label{alg:adaptive}
\end{algorithm}

\begin{example}
As an example, consider the case of a hierarchy with 6 levels using drop
tolerances of $[0,0.01, 0.1, 1.0, 1.0, 1.0]$~---~i.e.,\ $\shat{A}_{1}$ retains
all entries from $A_{1}$, $ \shat{A}_{2}$ and $\shat{A}_{3}$ result from
\code{sparsify} with  $\gamma = 0.01$ and $\gamma = 0.1$, etc.  Suppose that
\code{adaptive\_solve} with $k = 3$ and $s = 2$ results in 3 iterations of PCG\@
and a large residual.  The adaptive solve find the first level containing a
sparsified coarse grid matrix, namely $\shat{A}_{2}$.  The drop tolerance on
this level is changed from $0.01$ to $0.0$, and the original coarse matrix
$A_{2}$ is sparsified with to the new drop tolerance.  Furthermore, since $s=2$
the drop tolerance on level $3$ is reduced from $0.1$ to $0.01$, and
$A_{3}$ is also sparsified.  PCG then
restarts with the new hierarchy.  If convergence continues to suffer after 3
iterations, the hierarchy is updated again, but since $\shat{A}_{2}$  has $\gamma_2 = 0.0$,
entries are reintroduced into coarse matrices $\shat{A}_{3}$ and $\shat{A}_{4}$ instead.
\end{example}

Using Algorithm~\ref{alg:adaptive}, Figure~\ref{figure:adaptive} shows
both the relative residual of the system after each iteration as well as the
communication costs of PCG using three different AMG hierarchies: standard
Galerkin, Sparse Galerkin with diagonal lumping and aggressive dropping, and
Sparse Galerkin with diagonal lumping modified with adaptivity.  
For the adaptive case, we
purposefully choose an overly aggressive initial drop tolerance so that entries can be added back
multiple times and one coarse level at a time to show the effect on convergence
and communication.  Initially, when the drop tolerance is aggressive, 
the associated communication costs are low,
but the resulting PCG iterations do not converge; this provides a
baseline. As sparse entries are reintroduced into
the hierarchy, convergence improves, while only slightly increasing the
associated communication cost.  When entries are reintroduced into the hierarchy, the
preconditioner for PCG changes, and hence, the method must be restarted.  After
restarting the method, convergence improves.
\begin{figure}[ht!]
	\centering
	\includegraphics[width=0.70\textwidth]{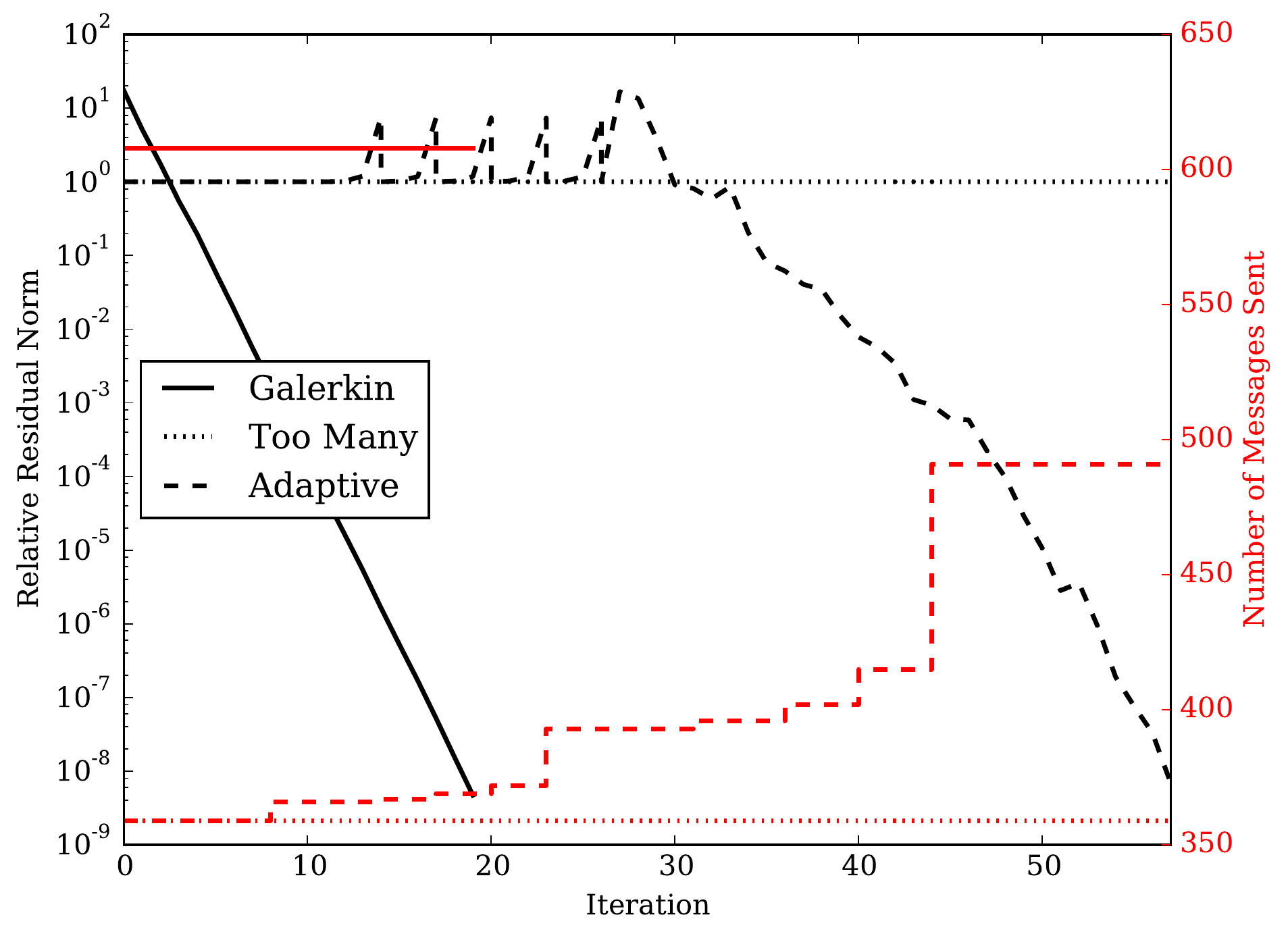}
\caption{Relative residual (black) and number of MPI sends (red) per
iteration when solving the {\bf Laplace} problem with: (1) PCG using Galerkin
AMG\@; (2) Hybrid Galerkin with aggressive dropping (labeled Too
Many); (3) Hybrid Galerkin solved with Algorithm~\ref{alg:adaptive},
using $k = 3$, $s = 1$, and $\gamma_{l} \dots \gamma_{l_{\max}}$ set to the same drop
  tolerances as the aggressive case.}\label{figure:adaptive}
\end{figure}

An important feature of Algorithm~\ref{alg:adaptive} is that it is solver
independent.  Indeed, other solvers may be used such as standalone AMG and
preconditioned FGMRES\@. No change to these solvers is needed when updating the
AMG hierarchy, as the change in hierarchy is considered during the solve.

\section{Conclusion}\label{section:conclusion}

We have introduced a lossless method to reduce the work required in parallel
algebraic multigrid by removing weak or unimportant entries from coarse-grid operators after the
multigrid hierarchy is formed.  This alternative to the original method of
non-Galerkin coarse-grids is similarly capable of reducing the communication costs on
coarse-levels, yielding an overall reduction in solve times.  Furthermore, this method
retains the original Galerkin hierarchy, allowing many of the restrictions of
non-Galerkin to be relaxed.  As a result, removed entries are easily lumped
directly to the diagonals, greatly reducing setup costs, while also
reducing communication complexity during the solve phase.  Furthermore,
as entries are added to the diagonal, entries removed from the matrix are stored
and adaptively reintroduced into the hierarchy if necessary for convergence.
Hence, the trade-off between convergence and the communication costs is controlled
at solve-time with little additional work.

\bibliographystyle{siam}
\bibliography{paper}

\begin{thebibliography}{10}

\bibitem{BlueWaters}
{\em {Blue Waters}}.
\newblock \url{https://bluewaters.ncsa.illinois.edu/}.

\bibitem{hypre}
{\em {HYPRE}: High performance preconditioners}.
\newblock \url{http://www.llnl.gov/CASC/hypre/}.

\bibitem{AsFa1996}
{\sc S.~F. Ashby and R.~D. Falgout}, {\em A parallel multigrid preconditioned
  conjugate gradient algorithm for groundwater flow simulations}, Nuclear
  Science and Engineering, 124 (1996), pp.~145--159.
\newblock UCRL-JC-122359.

\bibitem{bw-in-vetter13}
{\sc Brett Bode, Michelle Butler, Thom Dunning, Torsten Hoefler, William
  Kramer, William Gropp, and Wen mei Hwu}, {\em Contemporary High Performance
  Computing: From Petascale Toward Exascale}, vol.~1 of CRC Computational
  Science Series, Taylor and Francis, Boca Raton, 1~ed., 2013, ch.~The Blue
  Waters Super-System for Super-Science, pp.~339--366.

\bibitem{Bolten2015}
{\sc Matthias Bolten, Thomas~K. Huckle, and Christos~D. Kravvaritis}, {\em
  Sparse matrix approximations for multigrid methods}, Linear Algebra and its
  Applications,  (2015), pp.~--.

\bibitem{BrMcRu1984}
{\sc A.~Brandt, S.~F. McCormick, and J.~W. Ruge}, {\em Algebraic multigrid
  ({AMG}) for sparse matrix equations}, in Sparsity and Its Applications, D.~J.
  Evans, ed., Cambridge Univ. Press, Cambridge, 1984, pp.~257--284.

\bibitem{falgout}
{\sc E.~Chow, R.~D. Falgout, J.~J. Hu, R.~S. Tuminaro, and U.~M. Yang}, {\em A
  survey of parallelization techniques for multigrid solvers}, in Frontiers of
  Parallel Processing for Scientific Computing, Society for Industrial and
  Applied Mathematics, Philadelphia, PA, USA, 2005.

\bibitem{De1982}
{\sc JE~Dendy}, {\em Black box multigrid}, Journal of Computational Physics, 48
  (1982), pp.~366--386.

\bibitem{De1983}
{\sc J.~Dendy}, {\em Black box multigrid for nonsymmetric problems}, Appl.
  Math. Comput., 13 (1983), pp.~261--283.

\bibitem{ACCA}
{\sc Irad~Yavneh Eran~Treister, Ran~Zemach}, {\em Algebraic collocation coarse
  approximation (acca) multigrid}, in 12th Copper Mountain Conference on
  Iterative Methods, 2012.

\bibitem{NonGal_Schroder}
{\sc Robert~D. Falgout and Jacob~B. Schroder}, {\em Non-{G}alerkin coarse grids
  for algebraic multigrid}, SIAM Journal on Scientific Computing, 36 (2014),
  pp.~C309--C334.

\bibitem{Modeling}
{\sc Hormozd Gahvari, Allison~H. Baker, Martin Schulz, Ulrike~Meier Yang,
  Kirk~E. Jordan, and William Gropp}, {\em Modeling the performance of an
  algebraic multigrid cycle on hpc platforms}, in Proceedings of the
  International Conference on Supercomputing, ICS '11, New York, NY, USA, 2011,
  ACM, pp.~172--181.

\bibitem{ModelSpMV}
{\sc Hormozd Gahvari, Mark Hoemmen, James Demmel, and Katherine Yelick}, {\em
  Benchmarking sparse matrix-vector multiply in five minutes}, in {SPEC}
  Benchmark Workshop 2007, Austin, TX, January 2007.

\bibitem{BoomerAMG}
{\sc Van~Emden Henson and Ulrike~Meier Yang}, {\em Boomeramg: A parallel
  algebraic multigrid solver and preconditioner}, Appl. Numer. Math., 41
  (2002), pp.~155--177.

\bibitem{HPCC}
{\sc Piotr Luszczek, Jack~J. Dongarra, David Koester, Rolf Rabenseifner, Bob
  Lucas, Jeremy Kepner, John Mccalpin, David Bailey, and Daisuke Takahashi},
  {\em Introduction to the hpc challenge benchmark suite},  (2005).

\bibitem{McRu1982}
{\sc S.~F. McCormick and J.~W. Ruge}, {\em Multigrid methods for variational
  problems}, SIAM J. Numer. Anal., 19 (1982), pp.~924--929.

\bibitem{RuStu1987}
{\sc J.~W. Ruge and K.~St{\"{u}}ben}, {\em Algebraic multigrid ({AMG})}, in
  Multigrid Methods, S.~F. McCormick, ed., Frontiers Appl. Math., SIAM,
  Philadelphia, 1987, pp.~73--130.

\bibitem{DistTwo}
{\sc Hans~De Sterck, Robert~D. Falgout, Joshua~W. Nolting, and Ulrike~Meier
  Yang}, {\em Distance-two interpolation for parallel algebraic multigrid},
  Numerical Linear Algebra with Applications, 15 (2008), pp.~115--139.

\bibitem{AggCoarse2}
{\sc Hans~De Sterck, Ulrike~Meier Yang, and Jeffrey~J. Heys}, {\em Reducing
  complexity in parallel algebraic multigrid preconditioners}, SIAM J. Matrix
  Anal. Appl., 27 (2005), pp.~1019--1039.

\bibitem{NonGal_Treister}
{\sc Eran Treister and Irad Yavneh}, {\em Non-{G}alerkin multigrid based on
  sparsified smoothed aggregation}, SIAM Journal on Scientific Computing, 37
  (2015), pp.~A30--A54.

\bibitem{CCA}
{\sc Roman Wienands and Irad Yavneh}, {\em Collocation coarse approximation in
  multigrid}, SIAM Journal on Scientific Computing, 31 (2009), pp.~3643--3660.

\bibitem{AggCoarse}
{\sc Ulrike~Meier Yang}, {\em On long-range interpolation operators for
  aggressive coarsening}, Numerical Linear Algebra with Applications, 17
  (2010), pp.~453--472.

\end{thebibliography}

\end{document}